\newcommand{\BBound}{\kappa_B}
\newcommand{\PertBound}{W}
\newcommand{\Diameter}{D}
\newcommand{\CostGradBound}{G}
\newif\ifanonymous
\DeclareMathOperator{\Tr}{\mathrm{Tr}}
\def\K{{\mathcal K}}
\def\reals{{\mathbb R}}
\newcommand{\norm}[1]{\left\lVert#1\right\rVert}
\newcommand{\ignore}[1]{}
\def\reals{{\mathbb R}}
\def\bold0{\mathbf{0}}
\def\bu{\mathbf{u}}
\def\bx{\mathbf{x}}
\def\w{\mathbf{w}}
\def\bu{\mathbf{u}}
\def\bw{\mathbf{w}}
\newcommand{\eps}{\varepsilon}
\newcommand{\x}{\ensuremath{\mathbf x}}
\def\bu{\mathbf{u}}
\def\bx{\mathbf{x}}
\def\eps{\varepsilon}
\def\epsilon{\varepsilon}
\newtheorem{theorem}{Theorem}[section]
\newtheorem{lemma}[theorem]{Lemma}
\newtheorem{corollary}[theorem]{Corollary}
\newtheorem{definition}[theorem]{Definition}
\newtheorem{assumption}[theorem]{Assumption}
\newcommand{\newreptheorem}[2]{%
\newenvironment{rep#1}[1]{%
 \def\rep@title{#2 \ref{##1}}%
 \begin{rep@theorem}}%
 {\end{rep@theorem}}}
\newcommand{\namedref}[2]{\mbox{\hyperref[#2]{#1~\ref*{#2}}}}
\newcommand{\figurerefb}[2]{\mbox{\hyperref[#1]{Figure~\ref*{#1}#2}}}
\newcommand{\equationref}[1]{\mbox{\hyperref[#1]{(\ref*{#1})}}}
\renewcommand{\eqref}{\equationref}
\numberwithin{equation}{section}
\newcommand{\pa}[1]{\left(#1\right)}
\newcommand{\ang}[1]{\left<#1\right>}
\newcommand{\bra}[1]{\left[#1\right]}
\newcommand{\Return}{\textbf{return} }
\begin{document}
\title{A New Approach to Controlling Linear Dynamical Systems}

\ifanonymous
\author{Anonymous Authors}
\else
\author{
  Anand Brahmbhatt$^{1}\thanks{Authors ordered alphabetically. Emails: \texttt{\{ab7728,gon.buzaglo,sd0937,ehazan\}@princeton.edu}}$ \quad
  Gon Buzaglo$^{1\footnotemark[1]}$ \quad
  Sofiia Druchyna$^{1\footnotemark[1]}$ \quad
  Elad Hazan$^{1,2\footnotemark[1]}$ \\
  \\
  $^1$Computer Science Department, Princeton University \\
  $^2$Google AI Princeton
}
\fi

\maketitle
\begin{abstract}
We propose a new method for controlling linear dynamical systems under adversarial disturbances and cost functions. Our algorithm achieves a running time that scales polylogarithmically with the inverse of the stability margin, improving upon prior methods with polynomial dependence maintaining the same regret guarantees. The technique, which may be of independent interest, is based on a novel convex relaxation that approximates linear control policies using spectral filters constructed from the eigenvectors of a specific Hankel matrix.
\end{abstract}

\section{Introduction}
Controlling linear dynamical systems (LDS) in the presence of adversarial disturbances is a fundamental problem at the intersection of control theory and online learning, with implications for robotics, reinforcement learning, and beyond. Given control inputs \(\bu_t \in \mathbb{R}^n\) and disturbances \(\bw_t \in \mathbb{R}^d\), the system state \(\bx_t \in \mathbb{R}^d\) evolves according to the linear dynamics:
\begin{align}
    \bx_{t+1} = A\bx_t + B\bu_t + \bw_t\,. \label{eqn:lds_intro}
\end{align}
Given a sequence of cost functions \(c_t(\bx_t, \bu_t)\), the goal is to choose control inputs \(\bu_t\) that minimize the cumulative cost.

The classical Linear Quadratic Regulator (LQR) problem considers known system dynamics and fixed quadratic cost functions. Introduced in the seminal work of~\cite{kalman1960new}, the optimal control solution assumes that disturbances \(\bw_t\) are drawn i.i.d from a zero-mean distribution. However, this stochastic assumption is often unrealistic in practice, and cost functions may vary arbitrarily over time. In this paper, we address the setting of adversarial noise and more general convex Lipschitz loss functions \(c_t\) that are adversarially chosen.

This setting falls under the framework of \emph{Online Control}. For a comprehensive overview, see~\cite{hazan2025introductiononlinecontrol}. In this work, we focus on known time-invariant LDS with full observation of the states.
 
\paragraph{Minimizing Regret in Control.}
In adversarial settings, minimizing cumulative cost is generally intractable, as the decision maker only observes the loss after choosing the control input. Instead, following the standard approach in online decision making, we aim to compete with the best stationary policy \(\pi \in \Pi\) in hindsight. This is captured by the notion of \emph{regret}, defined as
\begin{align*}
    \text{Regret}_T\left(\mathcal{A},\Pi\right) = \sum_{t=1}^T c_t\left(\x_t^\mathcal{A},\bu_t^\mathcal{A}\right) - \min_{\pi \in \Pi} \sum_{t=1}^T c_t\left(\x_t^\pi,\bu_t^\pi\right)\,,
\end{align*}
where \(\x_t^\mathcal{A}\) is the state induced by the algorithm's controls \(\bu_t^\mathcal{A}\), and \(\x_t^\pi\) is the state under the fixed policy \(\pi\).

A natural choice for \(\Pi\) is the class of linear state-feedback policies \(\bu_t = K\x_t\), standard in control. While optimal in certain settings, finding the best such controller is a nonconvex problem. We therefore adopt improper learning, competing with the best linear policy using a broader class.

To enable efficient learning and analysis, it is common to impose structure on the comparator class—for instance, strong stability~\citep{cohen2018online} or diagonal strong stability~\citep{agarwal2019logarithmicregretonlinecontrol}. Following this line, we define in Definition~\ref{defn:diag_k_y_stable_lin_policies} the class of diagonalizably stable policies, which preserves sufficient expressiveness for general LDS control.
\\
\textbf{Marginally Stable Comparators:}  
For linear policies $\bu_t = K\x_t$, stability is determined by the spectral radius $\rho$ of the closed-loop matrix $A + BK$, with state evolution  
\(
\x_t = \sum_{i=1}^t (A + BK)^{i-1} \bw_{t-i}\,.
\)  
While small $\rho$ ensures rapid decay of disturbances, many applications favor \emph{marginal stability}, where $\rho = 1 - \gamma$ for small $\gamma$. This regime preserves long-term memory and yields smoother, more energy-efficient control, useful in settings like robotics, thermal systems, and satellite dynamics. See Section~\ref{appendix:smaller_gamma_construction} for an example.

\subsection{Our Contributions}

\textbf{New Control Method:} We propose \emph{Online Spectral Control} (OSC), a new algorithm that controls any controllable LDS by convolving past disturbances with the eigenvectors of a specific Hankel matrix. OSC achieves \emph{logarithmic} dependence on the stability margin, significantly improving runtime while retaining optimal regret guarantees.

\noindent
\textbf{Improved Runtime Dependence:} For Algorithm~\ref{alg:mainA}, we prove the regret bound
\[
\text{Regret}_T\left(\text{OSC}\right) \leq \sqrt{T}/\gamma^4\,,
\]
where \(\gamma\) is the \emph{stability margin} (Definition~\ref{defn:diag_k_y_stable_lin_policies}). The runtime scales only polylogarithmically with \(1/\gamma\), improving upon the polynomial dependence in the Gradient Perturbation Controller (GPC) of \citet{agarwal2019onlinecontroladversarialdisturbances}, using online convolution techniques from \citet{agarwal2024futurefillfastgenerationconvolutional}. See Table~\ref{tab:control_comparison} for comparison.

\begin{table}[H]
    \centering
    \renewcommand{\arraystretch}{1.2} 
    \setlength{\tabcolsep}{4pt} 
    \small 
    \begin{adjustbox}{max width=\linewidth} 
    \begin{tabular}[H]{lcccccc}
        \toprule
        \textbf{Method} & \textbf{Regret} & \textbf{Time}  & \textbf{Disturbances} & \textbf{Costs} \\
        \midrule
        LQR & \(O(1)\) & \(O(1)\)  & i.i.d & Fixed Known Quadratic \\
        Online LQ \citep{cohen2018online} & \(O(\gamma^{-2.5}\sqrt{T})\) & \(O(1)\)  & i.i.d & Online Quadratic \\
        GPC \citep{agarwal2019onlinecontroladversarialdisturbances} & \(\tilde{O}(\gamma^{-5.5}\sqrt{T})\) & \(O(\gamma^{-1}\log T)\) & Adversarial & Online Convex Lipschitz \\
        \rowcolor{cyan!10} \textbf{OSC (our Algorithm \ref{alg:mainA})} & \(\tilde{O}(\gamma^{-4}\sqrt{T})\) & \(O\left(\log^4\left(\gamma^{-1}T\right)\right)\) &  Adversarial & Online Convex Lipschitz \\
        \bottomrule
    \end{tabular}
    \end{adjustbox}
    \caption{Comparison of different control methods. The highlighted row corresponds to our proposed approach. In the regret bounds, we hide polylogarithmic factors by the notation \(\tilde{O}(\cdot)\). Our method is the only one to perform in the most general setting with the best running time.
}
    \label{tab:control_comparison}
\end{table}

\noindent
\textbf{Empirical Evaluation:} In Section~\ref{sec:experiments}, we show that OSC outperforms GPC across diverse benchmarks, establishing it as the most effective method in the general setting.

\subsection{Related Work}

\textbf{Control of Dynamical Systems:}  
Control theory, grounded in deep mathematical foundations and with a long history of practical applications, dates back to self-regulating feedback mechanisms in ancient Greece. The first formal mathematical treatment is attributed to James Clerk Maxwell~\citep{wellmax}. For a historical perspective, see~\cite{fernandez2003control}. The problem of stabilizing general dynamical systems has been shown to be \textsf{NP}-hard~\citep{aaac}; see~\cite{blondel2000survey} for a comprehensive survey on the computational complexity of control.

\noindent
\textbf{Linear Dynamical Systems:}  
Even simple questions about general dynamical systems are intractable. In control, Lyapunov pioneered the use of linearization to analyze local stability of nonlinear systems~\citep{lyapunov1992general}. The seminal work of~\cite{kalman1960new} introduced state-space methods and showed that any LDS can be controlled under stochastic assumptions and known quadratic costs.

\noindent
\textbf{Online Stochastic Control:}  
Early work in the ML community on control focused on the online LQR setting~\citep{abbasi2011regret,dean2018regret,mania2019certainty,cohen2019learning}, achieving \(\sqrt{T}\) regret with polynomial runtime. A parallel line of work~\citep{cohen2018online} studied online LQR with adversarially chosen quadratic losses, also achieving \(\sqrt{T}\) regret. In all of these works, regret is measured against the best linear controller in hindsight.

\noindent
\textbf{Online Nonstochastic Control:}  
Recent methods for fully adversarial control are surveyed in~\citet{hazan2025introductiononlinecontrol}. These approaches typically learn a parameterized mapping from past disturbances to control inputs, with the number of parameters—and hence the runtime—scaling polynomially with the inverse of the stability margin. Our work builds on the setting of~\citet{agarwal2019onlinecontroladversarialdisturbances}, proposing a more efficient algorithm with similar regret guarantees.

Subsequent work has refined the regret bounds under additional assumptions. \citet{agarwal2019logarithmicregretonlinecontrol} established logarithmic regret for strongly convex costs in the presence of stochastic or semi-adversarial noise, while~\citet{pmlr-v119-foster20b} derived similar guarantees for known quadratic costs under fully adversarial noise.

Beyond the full-information setting, \citet{simchowitz2020improperlearningnonstochasticcontrol} extended the framework to partial observation, \citet{minasyan2022onlinecontrolunknowntimevarying} analyzed adaptive regret in unknown and time-varying systems, and \citet{NEURIPS2023_45591d67, suggala2024secondordermethodsbandit} studied online control with bandit feedback.

Online control has also seen applications in diverse domains, including mechanical ventilation~\citep{suo2022machinelearningmechanicalventilation}, meta-optimization~\citep{10.5555/3666122.3667722}, and the regulation of population dynamics~\citep{golowich2024onlinecontrolpopulationdynamics, lu2025populationdynamicscontrolpartial}.

\noindent
\textbf{Online Convex Optimization:}  
Our method reduces online control to online convex optimization. For background on regret minimization and online learning, see~\cite{Cesa-Bianchi_Lugosi_2006,hazan2023introductiononlineconvexoptimization}.

\noindent
\textbf{Learning in Linear Dynamical Systems:}  
\citet{DBLP:journals/corr/HardtMR16} showed that unknown linear dynamical systems (LDSs) can be learned by applying random Gaussian inputs and optimizing via gradient descent. This approach was later extended to higher-dimensional and marginally stable systems by~\citet{pmlr-v97-sarkar19a}. More recently, \citet{bakshi2023new} introduced tensor-based methods for learning LDSs, and \citet{bakshi2023tensordecompositionsmeetcontrol} extended these techniques to handle mixtures of LDSs, following the formulation of~\citet{chen2022learningmixtureslineardynamical}. While these tensor-based approaches avoid dependence on the system’s condition number, their computational complexity still scales with the hidden dimension—a limitation addressed by recent developments in spectral filtering.

\noindent
\textbf{Spectral Filtering:}  
Spectral filtering was originally introduced for sequence prediction in online learning, as a means of bypassing the non-convexity inherent in learning linear dynamical systems~\citep{hazan2017learninglineardynamicalsystems, agarwal2024spectralstatespacemodels}. \citet{hazan2018spectralfilteringgenerallinear} extended the method to systems with non-symmetric dynamics, though their analysis incurred a dependence on the hidden dimension. This limitation was subsequently addressed by~\citet{marsden2025dimensionfreeregretlearningasymmetric}, who incorporated autoregressive structure and used Chebyshev polynomial approximations to obtain dimension-free bounds.

While spectral filtering has proven effective in prediction settings, its application to control remains limited. To our knowledge, the only prior use is by~\citet{arora2018towards}, who applied spectral filtering as a black-box subroutine in the offline LQR setting with unknown dynamics. In contrast, we address the online control problem, and develop a new spectral filtering approach applicable to \emph{any controllable} system (Definition~\ref{assm:controllable}), competing with the best diagonalizably stable policy in hindsight (Definition~\ref{defn:diag_k_y_stable_lin_policies}).

The key technical difference is that, unlike in prediction settings where past responses are available, online control requires leveraging system stability and integrating over a different set of eigenvalues, resulting in a distinct Hankel structure (see~\eqref{eq:H_def}).

\section{Algorithm and Main Result}
Our approach employs a convex relaxation of the linear control problem by computing universal spectral filters and then learning a linear combination of the multiplication of past disturbances with these filters, as described in detail in Algorithm \ref{alg:mainA}. The spectral filters are universal in the sense that they are independent of the specific linear dynamical system (LDS) at hand, the initial state, the disturbances, and the cost functions. In fact, they correspond exactly to the top eigenvectors of the matrix \( H \in \mathbb{R}^{m \times m} \) for some memory \(m\), whose entries are given by
\begin{align}\label{eq:H_def}
H_{ij} = \frac{(1-\gamma)^{i+j-1}}{i+j-1},
\end{align}
where \( \gamma \) is an assumed bound on the system's instability margin, formally defined in Definition \ref{defn:diag_k_y_stable_lin_policies}. We illustrate the filters in Figure \ref{fig:filters}. The number of required eigenvectors, denoted by \( h \), also determines the number of learnable parameters.

\begin{figure}[H]
    \centering
    \includegraphics[width=0.9\textwidth]{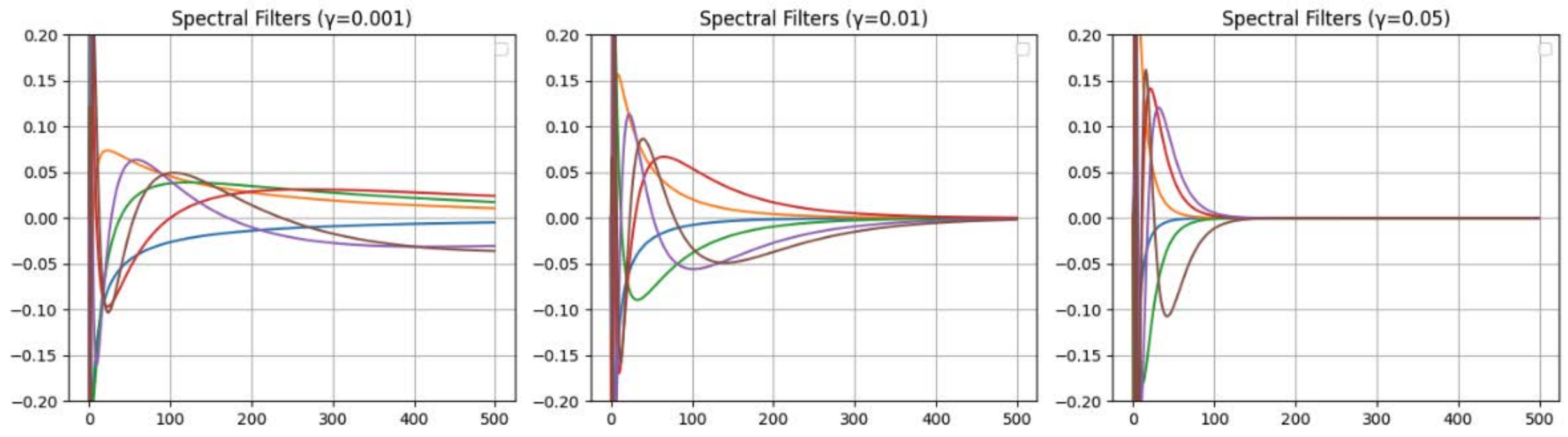}
    \caption{Entries of the first six eigenvectors of $H_{500}$, plotted coordinate-wise.}
    \label{fig:filters} 
\end{figure}

 \begin{algorithm}[t]
\caption{Online Spectral Control Algorithm}
\label{alg:mainA}
\begin{algorithmic}[1]
\STATE \textbf{Input:} Horizon \(T\), number of parameters \(h\), memory \(m\), step size $\eta$, convex constrains set \(\K\subseteq\reals^{h\times n\times d}\).
\STATE Compute $\{(\sigma_j, \boldsymbol{\phi}_j)\}_{j=1}^h$, the top $h$ eigenpairs of $H$ from Eq. \eqref{eq:H_def}.
\STATE Initialize $M_{1:h}^0 \in \reals^{h\times n\times d}$.
\FOR{$t = 0, \ldots, T-1$}
\STATE Define $\tilde{W}_{t-1:t-m} = [\bw_{t-1}, \dots, \bw_{t-m}] \in \mathbb{R}^{d \times m}$
\STATE \label{line:compute_control}Compute control $ \bu_t = \sum_{i=1}^h \sigma_i^{1/4}M_i^t \tilde{W}_{t-1:t-m} \boldsymbol{\phi}_i$
\STATE \label{line:compute_disturb} Observe the new state $\x_{t+1}$ and record \(\w_t=\x_{t+1}-A\x_t-B\bu_t\).
\STATE Set $M^{t+1}_{1:h} = \Pi_{\K}\left[M^t_{1:h} - \eta \nabla_{M} \ell_{t}\left(M^t_{1:h}\right)\right]$ 
\ENDFOR
\STATE \Return $M^T_{1:h}$
\end{algorithmic}
\end{algorithm}

\subsection{Algorithm}

In Algorithm \ref{alg:mainA} we learn a mapping from disturbances \(\bw_t\) to controllers \(\bu_t\). However, instead of learning a linear mapping of the disturbances, we learn a linear mapping of a specific matrix product involving the disturbances and the filters. Notably, in line \ref{line:compute_disturb} we explicitly leverage our knowledge of the system to compute the disturbances from the observed state and the chosen controller.

Finally, Algorithm \ref{alg:mainA} is simply an instance of Projected Online Gradient Descent \citep{10.5555/3041838.3041955}, applied to some convex set \(\K\) in which each element is a sequence of \(h\) matrices from \(\reals^{n\times d}\), and the loss functions are the specific memory-less loss functions \(\ell_t\), as in Definition \ref{def:memoryless-loss}.

\subsection{Main Result}
While we defer the formalization of our assumptions and notations to Section \ref{sec:prelim}, we present our main result here:

\begin{theorem}[Main Theorem]\label{thm:main}
Let \(c_t\) be any sequence of convex Lipschitz cost functions satisfying Assumption \ref{assm:lipschitz}, and let the LDS be controllable (Definition \ref{assm:controllable}) and satisfy Assumption \ref{assm:bounded-system}. Then, Algorithm \ref{alg:mainA} achieves the following regret bound:
\[
\text{Regret}_T\left(\mathrm{OSC}, \mathcal{S}\right) = \frac{C_0 C_1 \sqrt{T}}{\gamma^{4}} \log^3 \left(\frac{C_1 T d}{\gamma^3}\right)\,,
\]
where \(\mathcal{S}\) is the class of linear policies defined in Definition \ref{defn:diag_k_y_stable_lin_policies}. This result holds under the following choice of inputs:
\begin{enumerate}[label=(\roman*)]
    \item \(m = \left\lceil \frac{1}{\gamma} \log \left(\frac{8 C_1 \sqrt{T}}{\gamma^3}\right) \right\rceil\),
    \item \(h = \left\lceil 4 \log T \log \left(\frac{900 C_1 d T}{\gamma^3}\right) \right\rceil\),
    \item \(\eta = C_2 \sqrt{\frac{\gamma^3}{Tmh}}\),
    \item \(K\) is the set from Definition \ref{def:spec-parameters-set},
\end{enumerate}
where the constants are defined as follows:
\begin{align*}
C_0 \leq 10^3\,,\quad C_1 = G \kappa_B \kappa^8 W^2\,, \quad C_2 = \frac{\sqrt{2} \kappa^5}{3 C_1}\,.
\end{align*}
\end{theorem}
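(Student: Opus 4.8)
The plan is to run the standard reduction from online nonstochastic control to online convex optimization with memory, but with the disturbance–action parametrization replaced by the spectral-filter parametrization of Algorithm~\ref{alg:mainA}; every saving—in the number of learnable parameters $h$ and hence in runtime, and (as the exponents suggest) also in the $\gamma$-dependence of the regret—will be traced back to the fast spectral decay of the Hankel matrix $H$ of \eqref{eq:H_def}. First I would fix the OCO-with-memory scaffolding: for a tuple $M=M_{1:h}\in\K$ let $\x_t^{\mathrm{ideal}}(M)$ and $\bu_t^{\mathrm{ideal}}(M)$ be the state and control obtained by running the spectral controller with this \emph{same} $M$ at every step, truncating the dynamics to the last $m$ disturbances, and set $\ell_t(M)\defeq c_t\!\big(\x_t^{\mathrm{ideal}}(M),\bu_t^{\mathrm{ideal}}(M)\big)$ as in Definition~\ref{def:memoryless-loss}; since $\x_t^{\mathrm{ideal}},\bu_t^{\mathrm{ideal}}$ are affine in $M$ and the exact disturbances are recovered (line~\ref{line:compute_disturb}), each $\ell_t$ is a well-defined convex function and Algorithm~\ref{alg:mainA} is exactly projected OGD on $\ell_0,\dots,\ell_{T-1}$. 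I would then split
\begin{align*}
\mathrm{Regret}_T(\mathrm{OSC},\mathcal S)
&\le \underbrace{\textstyle\sum_t\!\big(c_t(\x_t^{\mathcal A},\bu_t^{\mathcal A})-\ell_t(M^t)\big)}_{(\mathrm A)}
+\underbrace{\textstyle\sum_t\ell_t(M^t)-\min_{M\in\K}\textstyle\sum_t\ell_t(M)}_{(\mathrm B)}\\
&\qquad+\underbrace{\textstyle\min_{M\in\K}\textstyle\sum_t\ell_t(M)-\min_{\pi\in\mathcal S}\textstyle\sum_t c_t(\x_t^\pi,\bu_t^\pi)}_{(\mathrm C)},
\end{align*}
where $(\mathrm A)$ is a truncation-plus-slow-movement error, $(\mathrm B)$ is the OGD regret, and $(\mathrm C)$ is the expressivity of the spectral-filter class relative to $\mathcal S$ (Definition~\ref{defn:diag_k_y_stable_lin_policies}).

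The crux is term $(\mathrm C)$. Fix $\pi\in\mathcal S$ with closed-loop matrix $A+BK=Q\Lambda Q^{-1}$, $|\Lambda_{jj}|\le1-\gamma$; the ideal control $\bu_t^\pi=\sum_{i\ge1}K(A+BK)^{i-1}\bw_{t-i}$ decomposes over the modes $\mu$, each contributing the geometric feature $\psi(\mu)=(\mu^0,\mu^1,\dots,\mu^{m-1})^{\!\top}$ with $|\mu|\le1-\gamma$. The matrix $H$ of \eqref{eq:H_def} is exactly $\int_0^{1-\gamma}\psi(\nu)\psi(\nu)^{\!\top}\,d\nu$, a positive-definite Hankel (Hilbert-type) matrix, so I would first prove that its eigenvalues $\sigma_j$ decay at least geometrically in $j/\log m$; this forces each geometric feature to lie within $\ell_2$-distance $\le\epsilon$ of $\myspan\{\phi_1,\dots,\phi_h\}$ once $h\gtrsim\log m\cdot\log(1/\epsilon)$, a condition the stated $h$ meets with $\epsilon$ polynomially small in $\gamma,1/T,1/d$. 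Re-expanding $K(A+BK)^{i-1}$ in the filter basis and absorbing the $\sigma_i^{1/4}$ rescaling of line~\ref{line:compute_control} yields reconstruction coefficients $M_i$ whose norms are controlled by $\kappa$ and $W$, so they lie in the set of Definition~\ref{def:spec-parameters-set}; the residual—modes past memory $m$ plus the spectral tail—is absorbed via stability and Assumption~\ref{assm:bounded-system}. This bounds $(\mathrm C)$ by $O(\sqrt T)$ up to polylog factors.

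For $(\mathrm A)$: the gap between $\x_t^{\mathcal A}$ and its memory-$m$ truncation is $O((1-\gamma)^m)$ times system constants, and the choice of $m$ forces $(1-\gamma)^m\le e^{-\gamma m}\le\gamma^3/(8C_1\sqrt T)$, contributing $O(\sqrt T)$; the gap between the memory-$m$ state under the actual iterates $M^{t-m},\dots,M^t$ and the ideal state under the single $M^t$ is at most $\big(\sum_i\|M^{t-i}-M^t\|\big)$ times a Lipschitz factor, and projected OGD guarantees $\|M^{t+1}-M^t\|\le\eta G_\ell$, so this is $O(m\eta G_\ell)$ per step, again $O(\sqrt T)$ for the stated $\eta$. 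For $(\mathrm B)$ I invoke the textbook projected-OGD bound $O(D\,G_\ell\sqrt T)$, with $D=\mathrm{diam}(\K)$ and $G_\ell=O(G)$ times the operator norm of the affine map $M\mapsto(\x^{\mathrm{ideal}},\bu^{\mathrm{ideal}})$, which carries one $O(1/\gamma)$ factor from summing the geometric memory. Substituting $m,h,\eta,\K$, collecting the powers of $\gamma$ (one from the effective memory, the rest distributed between $D$ and $G_\ell$ and rebalanced through $\eta$), and accumulating the polylogarithmic factors from $m$, $h$ and $\epsilon$ produces exactly $C_0 C_1\sqrt T\,\gamma^{-4}\log^3(\cdot)$.

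The main obstacle is the spectral step inside $(\mathrm C)$: establishing the geometric eigenvalue decay of this \emph{particular} Hankel matrix $H_{ij}=(1-\gamma)^{i+j-1}/(i+j-1)$—which comes from integrating over $[0,1-\gamma]$ rather than over the symmetric range used in prediction—presumably through a Cauchy/Hilbert-matrix (displacement-structure, Beckermann--Townsend-type) argument, and, in tandem, certifying that the reconstruction coefficients genuinely fit inside $\K$ rather than being merely finite. Two related technical hurdles are handling complex closed-loop eigenvalues, so that the real-interval representation of $H$ still captures $(\mu^{i-1})_i$ for every $|\mu|\le1-\gamma$ (e.g.\ by separating modulus and phase, as in prior asymmetric spectral-filtering analyses), and propagating the $\ell_2$ filter-approximation error through the memory-laden dynamics without losing more than a $\sqrt T$-order term.
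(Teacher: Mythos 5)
Your plan matches the paper's proof essentially step for step: the same three-term decomposition into movement error, OGD regret, and spectral-class expressivity, with the expressivity term handled exactly as you describe — first truncating to an open-loop controller of memory $m$, then re-expanding the geometric features $\mu_\alpha$ in the eigenbasis of $H$ and controlling the tail via a Beckermann--Townsend displacement-rank bound, which yields $|\mu_\alpha^\top\phi_j|\le\sqrt{2/\gamma}\,\sigma_j^{1/4}$ and places the reconstruction coefficients inside $\K$. The one hurdle you flag, complex closed-loop eigenvalues, is not overcome in the paper but assumed away: Definition~\ref{defn:diag_k_y_stable_lin_policies} restricts the comparator class to policies whose closed-loop matrix is diagonalizable with nonnegative \emph{real} eigenvalues, so the integration over $[0,1-\gamma]$ suffices.
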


Note that Algorithm \ref{alg:mainA} maintains at most \(h = O\left(\mathrm{polylog}\left(T/\gamma\right)\right)\) parameters at each step \(t\). Moreover, the matrix multiplications between $\tilde{W}_{t-1:t-m}$ and $\boldsymbol{\phi}_i$ (in line \ref{line:compute_control} of Algorithm \ref{alg:mainA}) for each time step $t \in [T]$, can be computed by zero-padding \(\boldsymbol{\phi}_i\) to be \(T\)-dimensional and performing online convolution with the disturbance stream $\{\bw_t\}_{t \in [T]}$. Using the efficient online convolution technique introduced in \cite{agarwal2024futurefillfastgenerationconvolutional} for this task, we obtain the following corollary: 

\begin{corollary}
    The average running time of each of the \(T\) executions of the inner loop in Algorithm \ref{alg:mainA} is \(O\left(\log^4\left(T/\gamma\right)\right)\).
\end{corollary}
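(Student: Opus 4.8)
The plan is to bound the per-iteration cost of the inner loop by separately accounting for the three operations performed at each step $t$: (1) computing the control $\bu_t = \sum_{i=1}^h \sigma_i^{1/4} M_i^t \tilde{W}_{t-1:t-m} \boldsymbol{\phi}_i$; (2) forming the disturbance $\w_t = \x_{t+1} - A\x_t - B\bu_t$; and (3) the projected gradient step $M^{t+1}_{1:h} = \Pi_{\K}[M^t_{1:h} - \eta \nabla_M \ell_t(M^t_{1:h})]$. Steps (2) and (3) are cheap: the matrix–vector products in (2) take $O(d^2 + dn)$ time (dimension factors, which the corollary suppresses), and the gradient step in (3) involves $O(h)$ matrices of size $n \times d$ plus a projection onto $\K$, both independent of $T$ and $\gamma$ up to the parameter count $h$.

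The crux is step (1), and more precisely the $h$ products $\tilde{W}_{t-1:t-m}\boldsymbol{\phi}_i$ for $i \in [h]$. Naively each such product costs $\Theta(md)$ per step, i.e. $\Theta(md)$ total per iteration with $m = \tilde\Theta(1/\gamma)$, which reintroduces the polynomial $1/\gamma$ dependence we are trying to avoid. The key observation, already flagged in the paragraph preceding the corollary, is that the scalar sequence $(\boldsymbol{\phi}_i^\top \tilde{W}_{t-1:t-m})_k = \sum_{j=1}^m (\boldsymbol{\phi}_i)_j (\bw_{t-j})_k$ is, for each coordinate $k \in [d]$ and each filter $i \in [h]$, exactly a (windowed) convolution of the fixed filter $\boldsymbol{\phi}_i$ — zero-padded to length $T$ — with the streaming disturbance signal $\{(\bw_t)_k\}_{t \in [T]}$. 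So I would invoke the online (streaming) convolution primitive of \citet{agarwal2024futurefillfastgenerationconvolutional}: after an $O(T \log^2 T)$ preprocessing, it answers each successive output entry of a convolution of two length-$T$ signals in amortized $O(\log^2 T)$ time. Applying this in parallel across the $h$ filters and $d$ coordinates gives amortized $O(h d \log^2 T)$ per step for all the products $\tilde{W}_{t-1:t-m}\boldsymbol{\phi}_i$; multiplying by the $\sigma_i^{1/4} M_i^t$ factors and summing costs a further $O(hnd)$.

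To assemble the final bound, I would plug in the parameter choices from Theorem~\ref{thm:main}: $h = \lceil 4\log T \log(900 C_1 d T/\gamma^3)\rceil = O(\log^2(Td/\gamma))$. Suppressing dimension factors $d, n$ and the problem constants hidden in $C_1$ (consistent with the $O(\cdot)$ in the corollary statement, which only tracks $T$ and $\gamma$), the dominant term is the convolution cost $O(h \log^2 T) = O(\log^2(T/\gamma) \cdot \log^2 T) = O(\log^4(T/\gamma))$, which matches the claimed $O(\log^4(T/\gamma))$; the $O(hnd)$ and gradient-step terms are lower order (only $O(\log^2(T/\gamma))$ in $T,\gamma$). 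Averaging over the $T$ iterations — the preprocessing cost $O(Td \log^2 T)$ amortizes to $O(\log^2 T)$ per step — preserves the bound.

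\textbf{Main obstacle.} The delicate point is verifying that the FutureFill-style online convolution primitive of \citet{agarwal2024futurefillfastgenerationconvolutional} genuinely applies here: it is designed for causal convolution of a streaming input against a fixed kernel, and one must check (a) that only a length-$m$ window of past disturbances matters, so zero-padding $\boldsymbol{\phi}_i$ to length $T$ and truncating the kernel support to $m$ is consistent; (b) that the disturbances $\w_t$ become available in time — here line~\ref{line:compute_disturb} computes $\w_t$ only \emph{after} $\bu_t$ is played, but $\bu_t$ depends only on $\w_{t-1}, \dots, \w_{t-m}$, so causality is respected; and (c) that running $hd$ independent copies of the primitive does not blow up the amortized bound, which follows since the primitive is linear-per-output and the copies are independent. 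Once these are checked, the arithmetic is routine.
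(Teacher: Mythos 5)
Your proposal follows essentially the same route as the paper: the paper's (very brief) justification is precisely to view the products $\tilde{W}_{t-1:t-m}\boldsymbol{\phi}_i$ as online convolutions of the zero-padded filters with the disturbance stream, invoke the amortized $O(\log^2 T)$-per-output online convolution of \citet{agarwal2024futurefillfastgenerationconvolutional}, and multiply by $h = O(\log^2(Td/\gamma))$ filters to get $O(\log^4(T/\gamma))$. Your write-up is correct and in fact supplies more of the bookkeeping (causality, windowing, the per-coordinate decomposition) than the paper does.
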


\paragraph{Proof Roadmap.}
The proof of Theorem~\ref{thm:main} follows a two-step strategy. First, we show that the class of diagonalizably stable linear policies can be approximated by a family of spectral controllers with bounded parameters (Lemma~\ref{lem:approx}). Second, we analyze the regret of Algorithm~\ref{alg:mainA}, which performs projected online gradient descent over this spectral policy class. The analysis leverages convexity of the loss functions and boundedness of the feasible set. A high-level overview of the analysis appears in Section~\ref{sec:analysis_overview}, and full proofs are provided in Sections~\ref{appendix:approx}--\ref{app:stabilized}.

\section{Preliminaries}\label{sec:prelim}
\subsection{Notation}

We use $\x$ to denote states, $\bu$ for control inputs, and $\bw$ for disturbances. The dimensions of the state and control spaces are denoted by $d = \dim(\x)$ and $n = \dim(\bu)$, respectively. Matrices related to the system dynamics and control policy are denoted by capital letters $A, B, K, M$. For convenience, we write $\bw_t = 0$ for all $t < 0$.

For any $t_2 \geq t_1$, we define $\tilde{W}_{t_2:t_1} \in \mathbb{R}^{d \times (t_2 - t_1 + 1)}$ as the matrix whose columns are $\bw_{t_2}, \dots, \bw_{t_1}$, in that order. Additionally, given matrices $M_1, \dots, M_h \in \mathbb{R}^{n \times d}$, we define $M_{1:h} \in \mathbb{R}^{n \times d \times h}$ as their concatenation along the third dimension.

Given a policy $\pi$, we denote the state and control at time $t$ by $(\bx_t^{\pi}, \bu_t^{\pi})$ when following $\pi$. If $\pi$ is parameterized by a set of parameters $\Theta$, and the context makes the inputs clear, we use $(\bx_t^\Theta, \bu_t^\Theta)$ or $(\bx_t(\Theta), \bu_t(\Theta))$ to refer to the same quantities. For simplicity, we use $(\bx_t, \bu_t)$ without any superscript or argument to refer to the state and control at time $t$ under Algorithm~\ref{alg:mainA}.

\subsection{Setting}\label{sec:assm}

We now present the key definitions and outline the assumptions used throughout the paper. Our setting considers adversarial noise and convex cost functions. We begin by defining controllability:

\begin{definition}\label{assm:controllable}
    An LDS as in \eqref{eqn:lds_intro} is controllable if the noiseless LDS given by \(\x_{t+1}=A\x_t+B\bu_t\) can be steered to any target state from any initial state.
\end{definition}

Since the disturbances are non-stochastic, we assume without loss of generality that $\bx_0 = 0$. The following assumptions formalize the notions of bounded disturbances and Lipschitz continuity of cost functions over bounded domains.

\begin{assumption}\label{assm:bounded-system}
The system matrix $B$ is bounded, i.e., $\|B\|\leq\BBound$. The disturbance at each time step is also bounded, i.e., $\|\bw_t\|\leq \PertBound$.
\end{assumption}

\begin{assumption}\label{assm:lipschitz}
The cost functions $c_t(\bx, \bu)$ are convex. Moreover, as long as $\|\bx\|, \|\bu\| \leq \Diameter$, the gradients are bounded:
\[
\|\nabla_\bx c_t(\bx,\bu)\|, \|\nabla_\bu c_t(\bx,\bu)\| \leq \CostGradBound\Diameter\,.
\]
\end{assumption}

In Definition 3.1 of \cite{cohen2018online}, the notion of a \((\kappa,\gamma)\)-strongly stable linear policy is introduced. Since our spectral analysis focuses on diagonalization, in Definition~\ref{defn:diag_k_y_stable_lin_policies}, we extend this to the notion of \((\kappa,\gamma)\)-diagonalizably stable policies.:

\begin{definition}\label{defn:diag_k_y_stable_lin_policies}
A linear policy $K$ is $(\kappa, \gamma)$-diagonalizably stable if there exist matrices $L, H$ satisfying $A + BK = H L H^{-1}$, such that the following conditions hold:
\begin{enumerate}
    \item $L$ is diagonal with nonnegative entries.
    \item The spectral norm of $L$ is strictly less than one, i.e., $\|L\| \leq 1 - \gamma$.
    \item The controller and the transformation matrices are bounded, i.e., $\|K\|, \|H\|, \|H^{-1}\| \leq \kappa$.
\end{enumerate}
We denote by
$\mathcal{S} = \{K : K \text{ is } (\kappa,\gamma)\text{-diagonalizably stable} \}$ the set of such policies, and, with slight abuse of notation, also use \(\mathcal{S}\) to refer to the class of linear policies \(\bu_t = S\x_t\) where \(S \in \mathcal{S}\). Each policy in $\mathcal{S}$ is fully parameterized by the matrix $K \in \reals^{n \times d}$.
\end{definition}

Definition \ref{defn:diag_k_y_stable_lin_policies} is similar to diagonal strong stability (Definition 2.3 of \cite{agarwal2019logarithmicregretonlinecontrol}), with the key difference that we require \(L\) to have real eigenvalues.\footnote{The requirement of nonnegative eigenvalues can be relaxed by integrating over a larger set; it is imposed here for ease of presentation.} Note that by Ackermann's formula \citep{ACKERMANN+1972+297+300, article}, there always exists \(K\in\mathcal{S}\) that controls the noiseless system. Finally, we introduce the following assumption for clarity of presentation, which we later relax in Section~\ref{app:stabilized}.

\begin{assumption}\label{asm:zero-stabilizable-system}
    The zero policy \(K = 0\) lies in \(\mathcal{S}\).\footnote{Assumption~\ref{asm:zero-stabilizable-system} can be relaxed via a simple reduction, as outlined in Section~\ref{app:stabilized}. Theorem~\ref{thm:main} holds without this assumption, up to an additional factor of $\kappa$ in $C_1$.}
\end{assumption}

For simplicity, we assume that $\kappa, \kappa_B, W \geq 1$ and $\gamma \leq 2/3$, without loss of generality.


\section{Analysis Overview}
\label{sec:analysis_overview}
We present the proof of our main result here, while deferring the proofs of technical lemmas to the following sections. Before proceeding, we outline the key technical considerations involved in establishing our final regret bound.

Algorithm \ref{alg:mainA} learns a convex relaxation of the policy class \(\mathcal{S}\), referred to as the \textit{spectral policy class}, defined as follows:

\begin{definition}\label{def:spectral-class}[Spectral Controller]
    The class of Spectral Controllers with \(h\) parameters, memory \(m\) and stability $\gamma$ is defined as:
    \begin{align*}
        \Pi^{\sf SC}_{h, m, \gamma} = \left\{\pi_{h, m, \gamma, M}^{\sf SC}(\bw_{t-1:t-m}) = \sum_{i=1}^h \sigma_i^{1/4}M_i \tilde{W}_{t-1:t-m} \boldsymbol{\phi}_i\right\} \,,
    \end{align*}
    where $\boldsymbol{\phi}_i \in \reals^m,\sigma_i\in\reals$ are the $i^{\sf th}$ top eigenvector and eigenvalue of $H \in \reals^{m \times m}$ such that $H_{ij} = \frac{\left(1-\gamma\right)^{i+j-1}}{i+j-1}$. Any policy in this class is fully parameterized by the matrices $M_{1:h} \in \reals^{n \times d \times h}$.
\end{definition}
To enable learning via online gradient descent, we require a bounded set of parameters: 
\begin{definition}\label{def:spec-parameters-set}
The set of bounded spectral parameters is defined as
    $$\K=\left\{M_{1:h}\in\reals^{h\times n\times d}\mid\norm{\x_t^M},\norm{\bu_t^M}\le\frac{3\kappa^3W}{\gamma}
,\norm{M_{1:h}}\le\kappa^3\sqrt{\frac{2h}{\gamma}}\right\}\,.$$
\end{definition}
In Section \ref{appendix:approx}, we prove that the spectral policy class can approximate $\mathcal{S}$ up to arbitrary accuracy. Formally, we state this as:

\begin{lemma}\label{lem:approx}
    For any linear policy $K \in \mathcal{S}$, there exists a $\sf SC$ policy with $M \in \K$ such that for any $\epsilon \in (0, 1)$:
    \begin{align}
        \sum_{t=1}^T \left|c_t(\x_t^{M}, \bu_t^{M}) - c_t(\x_t^{K}, \bu_t^{K})\right| \leq \epsilon T\nonumber \,,
    \end{align}
    if (i) $m = \left\lceil\frac{1}{\gamma}\log\left(\frac{8C_1}{\epsilon\gamma^3}\right)\right\rceil$ and (ii) $h \geq 2\log T\log\left(\frac{900C_1d}{\eps\gamma^{3}}\log T\log^{1/4}\left(\frac{2}{\gamma}\right)\log^{1/2}\left(\frac{8C_1}{\epsilon\gamma^3}\right)\right)$ where $C_1$ is as defined in Theorem \ref{thm:main}.
\end{lemma}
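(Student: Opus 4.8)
\textbf{Proof proposal for Lemma~\ref{lem:approx}.}

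The plan is to pass through an intermediate "exact" representation of the linear policy in terms of the disturbance history, and then show that this representation is itself well-approximated by a low-rank spectral expansion with bounded coefficients. Concretely, fix $K \in \mathcal{S}$ with diagonalization $A+BK = HLH^{-1}$, $L = \mathrm{diag}(\rho_1,\dots,\rho_d)$ with $0 \le \rho_\ell \le 1-\gamma$. Under the policy $\bu_t = K\x_t$ the state is $\x_t^K = \sum_{i\ge 1}(A+BK)^{i-1}\bw_{t-i}$ and $\bu_t^K = K\x_t^K = \sum_{i\ge 1} K(A+BK)^{i-1}\bw_{t-i}$. The first step is truncation: since $\|(A+BK)^{i-1}\| \le \kappa^2(1-\gamma)^{i-1}$, cutting the sum at $i = m$ incurs an $\ell_2$ error on $\x_t^K$ (and, up to a factor $\kappa$, on $\bu_t^K$) of order $\kappa^3 W (1-\gamma)^m/\gamma$, which the choice $m = \lceil \tfrac1\gamma \log(8C_1/(\eps\gamma^3))\rceil$ makes $\le \eps/(4G D)$-ish after matching constants; then Lipschitzness (Assumption~\ref{assm:lipschitz}) converts this state/control error into a per-step cost error, summing to $\le \eps T/2$. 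So it suffices to approximate the \emph{truncated} linear response $\sum_{i=1}^m K(A+BK)^{i-1}\bw_{t-i}$ (and the analogous state map) by an $\mathsf{SC}$ policy.

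The second and main step is the spectral expansion. Writing $K(A+BK)^{i-1} = (KH)\,L^{i-1}\,H^{-1}$, the truncated control is $(KH)\big(\sum_{i=1}^m L^{i-1} H^{-1}\bw_{t-i}\big)$; coordinate-wise in the eigenbasis this is built from scalar convolutions $\sum_{i=1}^m \rho^{i-1} w^{(\ell)}_{t-i}$ with $\rho \in [0,1-\gamma]$. The key observation — exactly as in spectral filtering for prediction — is that the vector $\mu(\rho) = (\rho^0,\rho^1,\dots,\rho^{m-1})^\top$ lies (up to normalization) close to the span of the top few eigenvectors of the Hankel matrix $H$ in \eqref{eq:H_def}, because $H = \int_0^{1-\gamma} \mu(\rho)\mu(\rho)^\top\, d\rho$ has exponentially decaying eigenvalues. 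Thus there is an $h$-term expansion $\mu(\rho) \approx \sum_{j=1}^h \alpha_j(\rho)\,\sigma_j^{1/4}\boldsymbol{\phi}_j$ with $\sum_j \alpha_j(\rho)^2$ bounded, and the residual controlled by $\sigma_{h+1}$, which decays like $e^{-\Omega(h/\log m)}$. Substituting this expansion for each eigen-coordinate $\rho_\ell$ produces matrices $M_i = (KH)\,\mathrm{diag}_\ell(\alpha_i(\rho_\ell))\,H^{-1}$ (and a companion set for the state map, or one simply uses that $\x_t = $ the same convolution without the leading $K$), so that $\bu_t^M = \sum_{i=1}^h \sigma_i^{1/4} M_i \tilde W_{t-1:t-m}\boldsymbol{\phi}_i$ matches $\Pi^{\mathsf{SC}}$. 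The choice of $h$ in hypothesis (ii) is exactly what is needed to drive the combined truncation-plus-tail error below $\eps/(4GD)$ per coordinate after accounting for the $d$ eigenvalues, the $\|KH\|\|H^{-1}\|\le \kappa^3$ factors, and the $\sqrt m$ from passing between $\ell_2$ and $\ell_\infty$ on length-$m$ windows — and $\log T$ appears because the per-step state norm can be as large as $\mathrm{poly}(T)$ before truncation is invoked, so a slightly sharper tail is needed. Finally I would verify the bounds defining $\K$ in Definition~\ref{def:spec-parameters-set}: $\|M_{1:h}\| \le \kappa^3\sqrt{2h/\gamma}$ follows from $\|M_i\| \le \kappa^3 \max_\ell |\alpha_i(\rho_\ell)| \cdot (\text{const})$ together with $\sum_i \alpha_i(\rho)^2 \lesssim 1/\gamma$ (the $1/\gamma$ being $\int_0^{1-\gamma}\!\|\mu(\rho)\|^2 d\rho$-type normalization), and the state/control norm bound $\le 3\kappa^3 W/\gamma$ follows from the same geometric-series estimate used in the truncation step, since the $\mathsf{SC}$ policy reproduces the truncated linear response up to small error.

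I expect the main obstacle to be the second step's bookkeeping: making the Hankel eigenvalue decay quantitative enough to yield the stated $h = O(\log T \cdot \log(\cdot))$ rather than something polynomial, and threading the exact constants ($C_1 = G\kappa_B\kappa^8 W^2$, the $900$, the $\log^{1/4}(2/\gamma)\log^{1/2}(\cdot)$ correction inside the $\log$) through the chain truncation error $\to$ spectral tail error $\to$ coordinate sum over $d$ eigenvalues $\to$ Lipschitz conversion $\to$ sum over $T$. The eigenvalue bound for the specific $\gamma$-weighted Hankel matrix \eqref{eq:H_def} — as opposed to the classical Hilbert-type Hankel used in prior spectral-filtering work — is the genuinely new estimate and is presumably isolated as a separate lemma in Section~\ref{appendix:approx}; everything else is careful but routine triangle-inequality and geometric-series work. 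A secondary subtlety is confirming that the \emph{same} $M \in \K$ approximates both $\x_t^K$ and $\bu_t^K$ simultaneously (so that the joint cost $c_t(\x_t,\bu_t)$ is controlled), which works because $\bu_t^M = K\x_t^M + (\text{small})$ by construction, i.e. the control map is $K$ applied to the state map plus the approximation residual.
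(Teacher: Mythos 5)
Your proposal follows essentially the same route as the paper: first truncate the linear policy to a memory-$m$ open-loop controller (the paper's Lemma~\ref{lem:approx-oloc}), then expand each geometric vector $\boldsymbol{\mu}_{\alpha}$ in the eigenbasis of the Hankel matrix, keep the top $h$ terms, and control the tail via the exponential eigenvalue decay and the bound $|\boldsymbol{\mu}_\alpha^\top\boldsymbol{\phi}_j|\le\sqrt{2/\gamma}\,\sigma_j^{1/4}$ (Lemmas~\ref{lem:approx-spectral}, \ref{lem:Z-decay}, \ref{lem:in-prod-bound}), exactly yielding $M_i^* = \sigma_i^{-1/4}KH(\sum_j\boldsymbol{\phi}_i^\top\boldsymbol{\mu}_{\alpha_j}e_je_j^\top)H^{-1}$ and the $\K$-membership check. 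The one place your sketch deviates slightly is the state error, which the paper bounds not via "$\bu^M\approx K\x^M$" but by propagating the control discrepancy through the open-loop dynamics $\sum_i A^{i-1}B(\bu^{M}_{t-i}-\bu^{K,m}_{t-i})$ using the stability of $A$ itself (Assumption~\ref{asm:zero-stabilizable-system}).
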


We further note that in Algorithm \ref{alg:mainA}, online gradient descent is not performed on the actual cost function, but on a modified cost function, referred to as the memory-less loss function:

\begin{definition}\label{def:memoryless-loss}
    We define the memory-less loss function at time \(t\) as 
    \begin{align}
        \ell_t(M_{1:h}) = c_t({\x_t\left(M_{1:h}\right)}, {\bu_t\left(M_{1:h}\right)})\,.
    \end{align}
\end{definition}
Classical results in online gradient descent provide a regret bound with respect to loss functions $\ell_t(M_{1:h}^t)$. However, our regret is defined in terms of the actual costs $c_t(\bx_t, \bu_t)$. Nevertheless, in Section \ref{appendix:memory}, we prove that $c_t(\bx_t, \bu_t)$ is well approximated by $\ell_t(M_{1:h}^t)$. We formally state this result as:
\begin{lemma}\label{lem:memoryless-enough}
    Algorithm \ref{alg:mainA} is executed with $\eta = C_2\sqrt{\frac{\gamma^3}{Tmh}}$. Then for every $t \in [T]$,
    \begin{align*}
        \left|c_t(\bx_t, \bu_t) - \ell_t(M_{1:h}^t)\right| \leq \frac{6C_1\sqrt{mh}}{\gamma^{7/2}\sqrt{T}}\log^{1/4}\left(\frac{2}{\gamma}\right) \,,
    \end{align*}
    where $C_1$ and $C_2$ are as defined in Theorem \ref{thm:main}.
\end{lemma}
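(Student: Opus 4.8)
The plan is to bound the gap $|c_t(\bx_t,\bu_t) - \ell_t(M_{1:h}^t)|$ by controlling the difference between the "true" state $\bx_t$ reached by running Algorithm~\ref{alg:mainA} and the "counterfactual" state $\bx_t(M_{1:h}^t)$ that would be reached if the \emph{current} parameter $M_{1:h}^t$ had been used all along (the latter being what $\ell_t$ evaluates, via Definition~\ref{def:memoryless-loss}). Since the costs $c_t$ are convex and, on the relevant bounded domain $\|\bx\|,\|\bu\|\le D$, have gradients bounded by $GD$ (Assumption~\ref{assm:lipschitz}), it suffices to show the state and control discrepancies are small and then apply Lipschitzness: $|c_t(\bx_t,\bu_t)-\ell_t(M_{1:h}^t)| \le GD\left(\|\bx_t-\bx_t(M_{1:h}^t)\| + \|\bu_t-\bu_t(M_{1:h}^t)\|\right)$, with $D = 3\kappa^3 W/\gamma$ from Definition~\ref{def:spec-parameters-set}.

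The key step is bounding $\|\bx_t - \bx_t(M_{1:h}^t)\|$. Write out $\bx_t$ by unrolling the dynamics $\bx_{t+1}=A\bx_t+B\bu_t+\bw_t$ starting from $\bx_0=0$: since under Assumption~\ref{asm:zero-stabilizable-system} the zero policy is in $\mathcal{S}$, the state is a convolution $\bx_t = \sum_{s} A^{s-1}(B\bu_{t-s}+\bw_{t-s})$, and the controls $\bu_\tau = \sum_{i=1}^h \sigma_i^{1/4} M_i^\tau \tilde W_{\tau-1:\tau-m}\boldsymbol{\phi}_i$ depend on the parameter $M_{1:h}^\tau$ at \emph{time} $\tau$. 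The counterfactual $\bx_t(M_{1:h}^t)$ is the same sum but with every $M_{1:h}^\tau$ replaced by $M_{1:h}^t$. The difference therefore telescopes into a sum $\sum_{s=1}^{t} A^{s-1} B \big(\bu_{t-s}(M_{1:h}^{t-s}) - \bu_{t-s}(M_{1:h}^t)\big)$, since the disturbance terms $\bw_{t-s}$ are recorded exactly in line~\ref{line:compute_disturb} and so cancel. Using $\|A^{s-1}\|\le \kappa(1-\gamma)^{s-1}$ (from the $H L H^{-1}$ decomposition of $A=A+B\cdot 0$ with $\|H\|,\|H^{-1}\|\le\kappa$, $\|L\|\le 1-\gamma$), $\|B\|\le\kappa_B$, and the bound $\|\bu_\tau(M) - \bu_\tau(M')\| \le \big(\sum_i \sigma_i^{1/4}\|\boldsymbol\phi_i\|\big)\cdot \|\tilde W\|\cdot \max_i\|M_i-M_i'\|$ controlled by the OGD step size, the geometric sum $\sum_s (1-\gamma)^{s-1} = 1/\gamma$ appears. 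The parameter drift $\|M_{1:h}^{t-s}-M_{1:h}^t\|$ over $s$ steps is at most $s\eta$ times a gradient bound on $\ell_t$; the gradient of $\ell_t$ is itself bounded (using Assumption~\ref{assm:lipschitz} and the boundedness of $\K$) by roughly $G D \cdot \frac{1}{\gamma}\sum_i\sigma_i^{1/4}\|\boldsymbol\phi_i\|\cdot W\sqrt{m}$-type factors. Multiplying by $s(1-\gamma)^{s-1}$ and summing over $s$ gives another $1/\gamma^2$ factor (since $\sum_s s(1-\gamma)^{s-1}\le 1/\gamma^2$). Collecting all $\kappa$, $\kappa_B$, $W$, $\gamma$ factors, and substituting $\eta = C_2\sqrt{\gamma^3/(Tmh)}$ with $C_2 = \sqrt{2}\kappa^5/(3C_1)$, should reproduce the claimed bound $\frac{6C_1\sqrt{mh}}{\gamma^{7/2}\sqrt{T}}\log^{1/4}(2/\gamma)$; the $\log^{1/4}(2/\gamma)$ and the $\sqrt{mh}$ enter through a bound on $\sum_i \sigma_i^{1/4}\|\boldsymbol{\phi}_i\| \lesssim \sqrt{h}\,\sigma_1^{1/4}$ together with the spectral-decay estimate $\sigma_1 \lesssim \log(1/\gamma)$ for the Hankel matrix $H$ in \eqref{eq:H_def}.

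The main obstacle I anticipate is getting the constants and the powers of $\gamma$ to line up exactly: this requires (a) a clean bound on the "filter mass" $\sum_{i=1}^h \sigma_i^{1/4}\|\boldsymbol{\phi}_i\|$ in terms of $h$ and $\log(1/\gamma)$, which relies on bounding the top eigenvalue $\sigma_1$ of the $\gamma$-weighted Hilbert-type matrix $H$ (the entry $(1-\gamma)^{i+j-1}/(i+j-1)$ integrates to $-\log(\gamma)$-order mass), and (b) a careful gradient bound on the memory-less loss $\ell_t$ that exploits both the boundedness constraints in $\K$ and the decay $\|A^s\|\le\kappa(1-\gamma)^s$, without double-counting the $1/\gamma$ factors. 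Everything else — convexity, Lipschitzness, the telescoping of the convolution, and the geometric sums — is routine. I would organize the proof as: (1) reduce to state/control discrepancy via Lipschitzness; (2) derive the telescoping convolution identity for $\bx_t - \bx_t(M^t)$; (3) bound $\|\bu_\tau(M)-\bu_\tau(M')\|$ and the parameter drift $\|M^{t-s}-M^t\|\le s\eta\|\nabla\ell\|$; (4) bound $\|\nabla\ell_t\|$ and the filter mass; (5) assemble the geometric sums and substitute $\eta$.
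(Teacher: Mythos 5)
Your proposal follows essentially the same route as the paper's proof: reduce to the state discrepancy via Lipschitzness of $c_t$ on the bounded domain, write $\bx_t$ and $\bx_t(M_{1:h}^t)$ as convolutions so the difference isolates the parameter drift $\|M_{1:h}^{t-i}-M_{1:h}^t\|\le i\eta L$ (with $L$ the Lipschitz constant of $\ell_t$ from Lemma~\ref{lem:lipschitz-memory}), and combine the geometric sum $\sum_i i(1-\gamma)^{i-1}\le 1/\gamma^2$ with the spectral bound $\sigma_j \lesssim \log(2/\gamma)$ before substituting $\eta$. The only simplification you did not note is that $\bu_t=\bu_t(M_{1:h}^t)$ exactly, so the control-discrepancy term vanishes and only the state term contributes.
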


\begin{proof}[Proof of Theorem \ref{thm:main}]

    \noindent For $\eps = 1/\sqrt{T}$, observe that our choice of $m$ and $h$ satisfies the conditions in Lemma \ref{lem:approx} (using the fact that $T,d,C \geq 1$ and $0 <\gamma < 1$). Hence, using Lemma \ref{lem:approx} and the Definition \ref{def:memoryless-loss} we get:
    \begin{align}\label{eq:approx-application}
         \min_{M^\star\in\K}\sum_{t=1}^T\ell_t(M^\star_{1:h})  -\min_{K\in\mathcal{S}} \sum_{t=1}^Tc_t(\x_t^{K}, \bu_t^{K}) \leq \sqrt{T} \leq \frac{C_1\sqrt{T}}{\gamma^{4}}\log^3\left(\frac{C_1Td}{\gamma^3}\right)\,.
    \end{align}
    We now invoke the regret of the Online Gradient Descent, stated in Theorem 3.1 in \cite{hazan2023introductiononlineconvexoptimization}. By Lemma \ref{lem:convex-set} the set \(\K\) is convex, and by Lemma \ref{lem:convex-functions} the memory-less loss functions are convex functions. Furthermore, using the bound on lipschitz constant of the memory-less loss computed in Lemma \ref{lem:lipschitz-memory} and the bound of the diameter of $\K$, for our choice of $\eta$, we get a regret bound. For our choice of $h$ and $m$, the regret bound evaluates to:
    \begin{align}\label{eq:ogd-application}
        \sum_{t=1}^T\ell_t\left(M_{1:h}^t\right)-\min_{M^\star\in\K}\sum_{t=1}^T\ell_t\left(M_{1:h}^\star\right) 
        & \leq \frac{384C_1\sqrt{T}}{\gamma^{4}}\log^3\left(\frac{900C_1dT}{\gamma^3}\right).
    \end{align}
    Next, Lemma \ref{lem:memoryless-enough} gives us the following bound on the difference between the memory-less loss  $\ell_t(M^t_{1:h})$ and the cost incurred by the algorithm $c_t(\bx_t, \bu_t)$ for every $t \in [T]$. Taking the sum over all $t \in [T]$:
    \begin{align}\label{eq:memoryless-close-application}
        \sum_{t=1}^Tc_t\left({\x_t,\bu_t}\right)-\sum_{t=1}^T\ell_t\left(M_{1:h}^t\right)
        & \leq  \frac{24C_1\sqrt{T}}{\gamma^4}\log^3\left(\frac{900C_1dT}{\gamma^3}\right) \,.
    \end{align}
    Finally, we add equations \eqref{eq:approx-application},\eqref{eq:ogd-application},\eqref{eq:memoryless-close-application} together to obtain the result.
\end{proof}

\section{Approximation Results}
\label{appendix:approx}
\noindent To prove Lemma \ref{lem:approx}, we shall show that the class of $(\kappa, \gamma)$-diagonalizably stable linear policies can be approximated using Open-Loop Optimal Controllers (Definition \ref{def:oloc}) in Lemma \ref{lem:approx-oloc}. We then show that the class of open-loop optimal controllers can be approximated by the class of spectral controllers in Lemma \ref{lem:approx-spectral}. We begin by defining the class of open-loop optimal controllers as follows:
\begin{definition}[Open Loop Optimal Controller]\label{def:oloc}
    The class of Open Loop Optimal Controllers of with memory $m$ is defined as:
    \begin{align*}
        \Pi^{\sf OLOC}_{m} = \left\{\pi_{m,K}^{\sf OLOC}(\bw_{t-1:t-m}) = K\sum_{i=1}^m \left(A+BK\right)^{i-1}\bw_{t-i}\right\} \,.
    \end{align*}
    Any policy in this class is fully parameterized by the matrix $K \in \reals^{d \times n}$ and the memory $m \in \mathbb{Z}$.
\end{definition}

Next, we state and prove Lemma \ref{lem:approx-oloc}, which shows that any linear policy in $\mathcal{S}$ can be approximated up to arbitrary accuracy with an open-loop optimal controller of suitable memory.

\begin{lemma}\label{lem:approx-oloc}
    Let a linear policy $K \in \mathcal{S}$. Then, for $m \geq \frac{1}{\gamma}\log\left(\frac{8G\kappa_B\kappa^8W^2}{\epsilon\gamma^3}\right)$ and $\eps \in (0,1)$, 
    \begin{align*}
        \sum_{t=1}^T \left|c_t(\x_t^{K,m}, \bu_t^{K,m}) - c_t(\x_t^K, \bu_t^K)\right| \leq \frac{\epsilon}{2} T\,, \quad \quad \quad \quad \norm{\x_t^{K,m}}, \norm{\bu_t^{K,m}} \leq \frac{2\kappa^3W}{\gamma}\,.
    \end{align*}
\end{lemma}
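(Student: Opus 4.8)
\textbf{Proof proposal for Lemma~\ref{lem:approx-oloc}.}

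The plan is to compare, time step by time step, the state/control pair $(\x_t^K,\bu_t^K)$ generated by the genuine linear feedback policy $\bu_t = K\x_t$ with the pair $(\x_t^{K,m},\bu_t^{K,m})$ generated by the truncated open-loop controller $\pi^{\sf OLOC}_{m,K}$, and then transfer the state/control closeness into cost closeness using the Lipschitz assumption (Assumption~\ref{assm:lipschitz}). First I would write down the closed form for the linear policy: under $\bu_t=K\x_t$ the state is $\x_t^K = \sum_{i=1}^{t}(A+BK)^{i-1}\bw_{t-i}$, and $\bu_t^K = K\x_t^K$. Diagonalizing $A+BK = HLH^{-1}$ with $\|L\|\le 1-\gamma$ and $\|H\|,\|H^{-1}\|,\|K\|\le\kappa$ (Definition~\ref{defn:diag_k_y_stable_lin_policies}), I get $\|(A+BK)^{i-1}\|\le \kappa^2(1-\gamma)^{i-1}$, hence the geometric bound $\|\x_t^K\|\le \kappa^2 W\sum_{i\ge1}(1-\gamma)^{i-1} = \kappa^2 W/\gamma$ and $\|\bu_t^K\|\le \kappa^3 W/\gamma$. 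Next I would observe that $\bu_t^{K,m}$ is exactly the truncation of $\bu_t^K$ to the last $m$ disturbances \emph{when the inner state is replaced by the ``ideal'' open-loop state}; more precisely, the OLOC control at time $t$ is $K\sum_{i=1}^m(A+BK)^{i-1}\bw_{t-i}$, which is the head of the series defining $\bu_t^K$ only if the states driving the two policies agree. They do not agree exactly, so the cleanest route is to define the ``infinite-memory'' open-loop control $\bu_t^{K,\infty} = K\sum_{i=1}^{t}(A+BK)^{i-1}\bw_{t-i} = \bu_t^K$ (these coincide since $\x_0=0$), and bound $\|\bu_t^{K,m}-\bu_t^K\|\le \|K\|\sum_{i=m+1}^{t}\|(A+BK)^{i-1}\|\,W \le \kappa^3 W \sum_{i>m}(1-\gamma)^{i-1} = \frac{\kappa^3 W}{\gamma}(1-\gamma)^{m}$.

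The subtlety is that $\x_t^{K,m}$ is not simply $\sum_{i=1}^m(A+BK)^{i-1}\bw_{t-i}$: the OLOC policy is applied in \emph{closed loop} through the true dynamics $\x_{t+1}=A\x_t + B\bu_t + \bw_t$, so errors in the control propagate through the state. I would therefore set $\delta_t \defeq \x_t^{K,m} - \x_t^K$ and derive the recursion $\delta_{t+1} = A\delta_t + B(\bu_t^{K,m}-\bu_t^K)$. Writing $\bu_t^{K,m} = K\x_t^{K,m} + e_t$ where $e_t = \bu_t^{K,m} - K\x_t^{K,m}$ is the ``open-loop minus would-be-feedback'' discrepancy, one gets $\delta_{t+1} = (A+BK)\delta_t + B e_t$, i.e. the error state is driven by $e_t$ through the \emph{same} stable closed-loop matrix. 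Crucially, $e_t = K\big(\sum_{i=1}^m(A+BK)^{i-1}\bw_{t-i} - \x_t^{K,m}\big)$; expanding $\x_t^{K,m}$ via its own recursion and telescoping, $e_t$ turns out to be exactly $-K\sum_{i>m}(A+BK)^{i-1}\bw_{t-i}$ plus a term involving $\delta$ — actually the clean statement is that $\x_t^{K,m} = \sum_{i=1}^m (A+BK)^{i-1}\bw_{t-i} + (A+BK)^m\,(\text{something bounded})$; I expect the honest computation to show $e_t$ is a pure truncation tail of size $\le \frac{\kappa^3 W}{\gamma}(1-\gamma)^m$, independent of $\delta$. Then $\|\delta_t\|\le \sum_{s<t}\|(A+BK)^{s}\|\,\|Be_{t-1-s}\| \le \kappa^2\kappa_B \cdot \frac{1}{\gamma}\cdot \frac{\kappa^3 W}{\gamma}(1-\gamma)^m = \frac{\kappa_B\kappa^5 W}{\gamma^2}(1-\gamma)^m$, and similarly $\|\bu_t^{K,m}-\bu_t^K\| \le \|K\|\|\delta_t\| + \|e_t\| \lesssim \frac{\kappa_B\kappa^6 W}{\gamma^2}(1-\gamma)^m$.

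With $m \ge \frac{1}{\gamma}\log\big(\frac{8C_1}{\eps\gamma^3}\big)$ and the bound $(1-\gamma)^m \le e^{-\gamma m} \le \frac{\eps\gamma^3}{8C_1}$ where $C_1 = G\kappa_B\kappa^8 W^2$, both state and control errors are at most $\frac{\eps\gamma}{8 G \kappa^2 W}$ (with room to spare, since $C_1$ carries extra powers of $\kappa$ and $W$). For the norm bound in the lemma statement, combine $\|\x_t^K\|\le \kappa^2 W/\gamma$ with $\|\delta_t\|$ being lower order (at most, say, $\kappa^2 W/\gamma$ again when $\gamma\le 2/3$), giving $\|\x_t^{K,m}\|\le 2\kappa^3 W/\gamma$ (the $\kappa^3$ leaving slack), and likewise for $\bu_t^{K,m}$. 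Finally, since both trajectories stay within the ball of radius $\Diameter$ assumed in Assumption~\ref{assm:lipschitz} (this requires $D \ge 2\kappa^3W/\gamma$, which I would note follows from the standing conventions, or fold into the definition of the relevant bound), the cost gap at each step is $|c_t(\x_t^{K,m},\bu_t^{K,m}) - c_t(\x_t^K,\bu_t^K)| \le G\Diameter\,(\|\delta_t\| + \|\bu_t^{K,m}-\bu_t^K\|) \le \eps/2$, and summing over $t\in[T]$ gives the claimed $\frac{\eps}{2}T$.

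\textbf{Main obstacle.} The only genuinely delicate point is pinning down $e_t$ — the discrepancy between the open-loop control and the feedback control $K\x_t^{K,m}$ it would induce under the realized state — and verifying that it is a clean truncation tail rather than something that feeds back on $\delta_t$ and forces a Grönwall-type argument. I expect that expanding $\x_t^{K,m}$ through the true dynamics and re-indexing sums shows $e_t = -K\sum_{i=m+1}^{t}(A+BK)^{i-1}\bw_{t-i}$ exactly (the $\delta$-dependent pieces cancel because both $\x_t^{K,m}$ and the feedback expansion use the same matrix $A+BK$), after which everything is a geometric-series estimate; but getting the bookkeeping of indices and the base case $\x_0=0$ right is where the care is needed. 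A secondary, purely cosmetic obstacle is tracking the exact powers of $\kappa$ so that the final constants match $C_1 = G\kappa_B\kappa^8W^2$ and the radius $2\kappa^3 W/\gamma$ — this is slack-laden and should not cause trouble.
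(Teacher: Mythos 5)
Your proposal is correct and takes essentially the same route as the paper: bound the control difference as a pure truncation tail, $\norm{\bu_t^{K,m}-\bu_t^K}\le\frac{\kappa^3W}{\gamma}(1-\gamma)^m$, propagate it through the state via $\x_t^{K,m}-\x_t^K=\sum_{i=1}^t A^{i-1}B(\bu_{t-i}^{K,m}-\bu_{t-i}^K)$ using Assumption~\ref{asm:zero-stabilizable-system} to control $\norm{A^{i-1}}$, and finish with the Lipschitz bound on a ball of radius $2\kappa^3W/\gamma$. Your self-identified ``main obstacle'' is a non-issue: because both $\bu_t^{K,m}$ and $\bu_t^K$ are explicit functions of the disturbances alone (the feedback policy's control also unrolls to $K\sum_{i=1}^t(A+BK)^{i-1}\bw_{t-i}$ since $\x_0=0$), the control difference has no dependence on $\delta_t$, so the recursion $\delta_{t+1}=A\delta_t+B(\bu_t^{K,m}-\bu_t^K)$ closes immediately with no Gr\"onwall-type argument — and indeed your $e_t$ decomposition, once the $-K\delta_t$ term is made explicit, cancels back to exactly this recursion.
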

\begin{proof}
    We begin by bounding the difference in the state and the difference in the control inputs. The difference in cost is bounded using the fact that the cost functions are lipschitz in the control and state. We begin by unrolling the expressions for $\bu_t^K$ and $\x_t^K$ in terms of $\bw_t$:
    \begin{align}\label{eq:lin_input_state}
        \bu_t^K = K\sum_{i=1}^t (A+BK)^{i-1}\bw_{t-i} \,, \quad\quad\quad
        \x_t^K = \sum_{i=1}^t A^{i-1}\bw_{t-i} + \sum_{i=1}^t A^{i-1} B\bu_{t-i}^K \,.
    \end{align}
    Notice that for all $t \leq m$, $\bu_t^K = \bu_t^{K,m}$ and hence $\x_t^K = \bx_t^{K,m}$. For $t > m$,
    \begin{align*}
        \norm{\bu_t^K - \bu_t^{K,m}} & = \norm{K\sum_{i=1}^t (A+BK)^{i-1}\bw_{t-i} - K\sum_{i=1}^m (A+BK)^{i-1}\bw_{t-i}} & \\
        & = \norm{K\sum_{i=m+1}^t (A+BK)^{i-1}\bw_{t-i}} & \\
        & \leq \kappa^3W\sum_{i=m+1}^t (1-\gamma)^{i-1} & [ \Delta-\mbox{ineq., C-S}] \\
        & \leq \frac{\kappa^3W}{\gamma}(1-\gamma)^m\,.
    \end{align*}
        Using this together with \eqref{eq:lin_input_state} and the fact that \(\bu_t^K = \bu_t^{K,m}\) for any \(t \leq m\), we similarly get:
        \begin{align*}
        \norm{\x_t^K - \x_t^{K,m}} = \norm{\sum_{i=1}^t A^{i-1}B(\bu_{t-i}^K - \bu_{t-i}^{K,m})}=  \norm{\sum_{i=1}^{t-m} A^{i-1}B(\bu_{t-i}^K - \bu_{t-i}^{K,m})} \,, 
        \end{align*}
        and by using Assumption \ref{asm:zero-stabilizable-system} we can write:
        \begin{align*}
        \norm{\x_t^K - \x_t^{K,m}}\leq \kappa_B\kappa^2\sum_{i=1}^{\infty} (1-\gamma)^{i-1}\norm{\bu_{t-i}^K - \bu_{t-i}^{K,m}} \leq \frac{\kappa_B\kappa^5W}{\gamma^2}(1-\gamma)^m\,. & 
    \end{align*}
    Using the fact that $\kappa_B\kappa^2/\gamma > 1$, we get the following uniform bound:
    \begin{align*}
        \max\left\{\norm{\bu_t^K - \bu_t^{K,m}}, \norm{\bx_t^K - \bx_t^{K,m}}\right\} \leq \frac{\kappa_B\kappa^5W}{\gamma^2}(1-\gamma)^m\,.
    \end{align*}
    Whenever $m \geq \frac{1}{\gamma}\log\left(\frac{\kappa_B\kappa^2}{\gamma}\right)$, which is indeed true from our choice of $\eps$ and $m$, this implies that the $\norm{\x_t^K - \x_t^{K,m}}, \norm{\bu_t^K - \bu_t^{K,m}} \leq \kappa^3W/\gamma$. Using Lemma \ref{lem:bound-xu-linear-diam} $\norm{\x_t^K}, \norm{\bu_t^K} \leq \kappa^3W/\gamma$, by triangle inequality $\norm{\x_t^{K,m}}, \norm{\bu_t^{K,m}} \leq 2\kappa^3W/\gamma$. 
    Thus, the sum of costs is bounded using lipschitzness of $c_t$ as follows:
    \begin{align*}
        \sum_{t=1}^T \left|c_t(\x_t^K, \bu_t^K) - c_t(\x_t^{K,m}, \bu_t^{K,m})\right| & \leq \frac{2G\kappa^3W}{\gamma}\sum_{t=1}^T\left(\norm{\x_t^K - \x_t^{K,m}} + \norm{\bu_t^K - \bu_t^{K,m}}\right) & \\
        & \leq \frac{4G\kappa_B\kappa^8W^2}{\gamma^3}(1-\gamma)^m \cdot T \leq \frac{\epsilon}{2}T\,. & [\mbox{choice of } m]
    \end{align*}
\end{proof}

We shall now prove that every open-loop optimal controller can be approximated up to arbitrary accuracy with a spectral controller.

\begin{lemma}\label{lem:approx-spectral}
    For every open loop optimal controller $\pi_{K,m}^{\sf OLOC}$ such that $K \in \mathcal{S}$ and $\norm{\bx^{K,m}}, \norm{\bu^{K,m}} \leq 2\kappa^3W/\gamma$, there exists an spectral controller $\pi_{h,m,\gamma,M}^{\sf SC}$ with 
    \(M\in\K\) such that:
    \begin{align*}
        \sum_{t=1}^T \left|c_t(\x_t^M, \bu_t^M) - c_t(\x_t^{K,m}, \bu_t^{K,m})\right| \leq \frac{\epsilon}{2} T\,.
    \end{align*}
    for any $\epsilon \in (0,1)$ and \(h \ge  2\log T\log\left(\frac{600G\kappa_B \kappa^8W^2\sqrt{m}d}{\eps\gamma^{5/2}}\log T\log^{1/4}\left(\frac{2}{\gamma}\right)\right)\).
\end{lemma}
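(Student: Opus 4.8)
\textbf{Proof proposal for Lemma~\ref{lem:approx-spectral}.}

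The plan is to reduce the approximation of the OLOC map to a statement about approximating the scalar functions $\mu \mapsto \mu^{j-1}$ (for $j = 1,\dots,m$, with $\mu = $ an eigenvalue of $L$) by linear combinations of the top-$h$ eigenvectors of the Hankel matrix $H$. First I would diagonalize: write $A + BK = HLH^{-1}$ as in Definition~\ref{defn:diag_k_y_stable_lin_policies}, so that the OLOC control at time $t$ is $\bu_t^{K,m} = K\sum_{i=1}^m (A+BK)^{i-1}\bw_{t-i} = KH\,(\sum_{i=1}^m L^{i-1}\,\tilde W_{t-1:t-m}^{(i)} )\,H^{-1}$ acting appropriately; the key point is that, after diagonalization, the dependence on the memory index $i$ is entirely through the monomials $\mu_k^{i-1}$ where $\mu_k \in [0, 1-\gamma]$ are the diagonal entries of $L$. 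I would then want to express the vector $(1, \mu_k, \dots, \mu_k^{m-1})$ (or a suitably weighted version of it) in the eigenbasis $\{\boldsymbol{\phi}_i\}$ of $H$ and argue that the tail beyond the top $h$ eigenvectors contributes negligibly. The quantitative handle is that $H_{ij} = \int_0^{1-\gamma} x^{i+j-2}\,dx$, so $H = \int_0^{1-\gamma} \bm{v}(x)\bm{v}(x)^\tr dx$ where $\bm{v}(x) = (1, x, \dots, x^{m-1})^\tr$; hence the quadratic form $\bm{v}(\mu_k)^\tr H^{-1} \bm{v}(\mu_k)$ is bounded, and the eigenvalues of $H$ decay geometrically (this is the classical fact underlying spectral filtering — the relevant statement is presumably established in an earlier lemma of this paper, or one cites the Hankel/Hilbert-matrix spectral decay bound). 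This is where the $h = \Theta(\log T \cdot \log(\cdots))$ scaling comes from: to make the projection error at most $\eps/(\text{poly})$ one needs $h$ logarithmic in all the parameters.

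The concrete steps, in order: (1) Rewrite $\bu_t^{K,m}$ and $\x_t^{K,m}$ in terms of the diagonalized dynamics and identify the coefficient tensor that the spectral controller must reproduce; specifically show there is a choice $M_{1:h}$ (built from $K$, $H$, $H^{-1}$, the eigenvalues $\sigma_i^{-1/4}$, and the coordinates of $\boldsymbol{\phi}_i$) so that $\bu_t^M$ equals $\bu_t^{K,m}$ up to the projection residual. (2) Bound the residual: the error in reproducing $\bu_t^{K,m}$ is controlled by $\|K\|\|H\|\|H^{-1}\| \cdot W \cdot (\text{sum over } k \text{ of projection tail for } \bm{v}(\mu_k))$, and the projection tail for each $\bm{v}(\mu_k)$ is bounded by $\sqrt{\sum_{i > h}\sigma_i}\cdot \sqrt{\bm v(\mu_k)^\tr H^{-1}\bm v(\mu_k)}$ times normalization; invoke geometric decay $\sigma_i \lesssim c^{i}$ to get $\sum_{i>h}\sigma_i \lesssim c^h$. (3) Propagate the control error to a state error exactly as in the proof of Lemma~\ref{lem:approx-oloc} (using Assumption~\ref{asm:zero-stabilizable-system} so that $\|\sum_i A^{i-1}B\| \le \kappa_B\kappa^2/\gamma$), giving $\|\x_t^M - \x_t^{K,m}\| \lesssim \frac{\kappa_B\kappa^2}{\gamma}\|\bu^M - \bu^{K,m}\|_\infty$. (4) Convert state/control errors to cost errors via Assumption~\ref{assm:lipschitz} (Lipschitz constant $G\Diameter$ with $\Diameter = 3\kappa^3W/\gamma$), sum over $t\in[T]$, and choose $h$ so the total is $\le \frac\eps2 T$ — matching the stated bound. (5) Separately verify the membership $M_{1:h}\in\K$: check the norm bound $\|M_{1:h}\| \le \kappa^3\sqrt{2h/\gamma}$ from the explicit construction (the $\sigma_i^{-1/4}$ factors are the delicate part, but $\sigma_i^{1/4}$ appears in the policy definition precisely to keep $\|M_{1:h}\|$ and $\|\bu_t^M\|$ simultaneously bounded), and check $\|\x_t^M\|,\|\bu_t^M\| \le 3\kappa^3W/\gamma$ using the $2\kappa^3 W/\gamma$ bound on the OLOC trajectory plus the (small) approximation error.

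The main obstacle I expect is Step~(2) combined with Step~(5): one must simultaneously (a) show the eigenbasis truncation error is small and (b) show the resulting coefficients $M_{1:h}$ have norm at most $\kappa^3\sqrt{2h/\gamma}$, and these pull in opposite directions since boosting $h$ helps (a) but the naive coefficient bound grows. The resolution is the $\sigma_i^{1/4}$-reweighting in Definition~\ref{def:spectral-class}: the ``true'' coefficients scale like $\sigma_i^{-1/4}\cdot(\text{something involving } \bm v(\mu_k)^\tr H^{-1})$, and the Hankel spectral-decay estimates must be sharp enough that $\sum_i \sigma_i^{-1/2}\cdot(\text{coeff})^2$ stays $O(\kappa^6 h/\gamma)$. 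Getting these constants to line up — and in particular establishing the precise geometric decay rate of the eigenvalues of the $\gamma$-weighted Hilbert-type matrix $H$ (likely the content of a supporting lemma proving $\sigma_i \le C(1-\gamma)^{?}\cdot(\text{exp decay in }i)$, analogous to the Beckermann–Townsend-type bounds used in prior spectral-filtering work) — is the technical heart of the argument; the rest is bookkeeping analogous to Lemma~\ref{lem:approx-oloc}.
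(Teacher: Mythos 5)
Your overall architecture matches the paper's: diagonalize $A+BK=HLH^{-1}$, write the OLOC control in terms of the moment vectors $\boldsymbol{\mu}_{\alpha_j}=[1,\alpha_j,\dots,\alpha_j^{m-1}]$, insert the resolution of identity $\sum_i\boldsymbol{\phi}_i\boldsymbol{\phi}_i^\top=I_m$, take $M^*_i=\sigma_i^{-1/4}KH\bigl(\sum_j\boldsymbol{\phi}_i^\top\boldsymbol{\mu}_{\alpha_j}e_je_j^\top\bigr)H^{-1}$, bound the truncated tail, propagate to the state via $\kappa_B\kappa^2/\gamma$, and finish with Lipschitzness. Steps (1), (3), (4) are exactly the paper's proof. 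However, there is a genuine gap at the technical heart, which you yourself flag as the crux: your quantitative handle on the projection coefficients is Cauchy--Schwarz against $H^{-1}$, i.e.\ $\sum_{i>h}|\ang{\boldsymbol{\phi}_i}{\boldsymbol{\mu}_\alpha}|\le\bigl(\sum_{i>h}\sigma_i\bigr)^{1/2}\bigl(\boldsymbol{\mu}_\alpha^\top H^{-1}\boldsymbol{\mu}_\alpha\bigr)^{1/2}$, together with the unproved assertion that $\boldsymbol{\mu}_\alpha^\top H^{-1}\boldsymbol{\mu}_\alpha$ is ``bounded.'' That quadratic form is the reciprocal Christoffel function of the measure on $[0,1-\gamma]$, and it is \emph{not} bounded uniformly in $m$: it grows polynomially in $m$ (up to order $m^2$ near the endpoint $\alpha=1-\gamma$), and establishing even that requires a Nikolskii-type inequality you do not supply. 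A $\mathrm{poly}(m)$ loss could be absorbed into the choice of $h$ for the tail estimate, but it is fatal for your step (5): the set $\K$ is fixed in Definition~\ref{def:spec-parameters-set} with the bound $\norm{M_{1:h}}\le\kappa^3\sqrt{2h/\gamma}$, and your route only yields $\sigma_i^{-1/4}|\ang{\boldsymbol{\phi}_i}{\boldsymbol{\mu}_\alpha}|\le\sigma_i^{1/4}(\boldsymbol{\mu}_\alpha^\top H^{-1}\boldsymbol{\mu}_\alpha)^{1/2}=O(m\log^{1/4}(2/\gamma)/\sqrt{\gamma})$ per coordinate, which overshoots the required $\sqrt{2/\gamma}$ by a factor of order $m$, so your $M^*$ does not land in $\K$.

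What the paper does instead (Lemma~\ref{lem:in-prod-bound}) is prove the pointwise bound $|\boldsymbol{\mu}_\alpha^\top\boldsymbol{\phi}_j|\le\sqrt{2/\gamma}\,\sigma_j^{1/4}$ by an elementary argument: the function $g(\alpha)=(\boldsymbol{\mu}_\alpha^\top\boldsymbol{\phi}_j)^2$ is nonnegative, $2/\gamma^2$-Lipschitz on $[0,1-\gamma]$, and integrates to exactly $\sigma_j$, so its maximum $R$ satisfies $R^2\gamma^2/4\le\sigma_j$. This single inequality simultaneously gives the tail bound (via $\sum_{i>h}\sigma_i^{1/4}$ and the Beckermann--Townsend geometric decay of Lemma~\ref{lem:sigma-geometric}, which is indeed the citation you anticipated) and the membership $\norm{M^*_{1:h}}\le\kappa^3\sqrt{2h/\gamma}$. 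To repair your proof you would need to derive this (or an equally sharp, $m$-independent) bound on the individual inner products; the Cauchy--Schwarz-with-$H^{-1}$ substitute is not sharp enough.
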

\begin{proof}
    Since \(K\in\mathcal{S}\), there exists a diagonal \(L\in\reals^{n\times n}\) as in Definition \ref{defn:diag_k_y_stable_lin_policies} so that:
    \begin{align*}
        \bu_t^{K,m} & = K\sum_{i=1}^m (A+BK)^{i-1}\bw_{t-i} = K\sum_{i=1}^m HL^{i-1}H^{-1}\bw_{t-i}\,.
        \end{align*}
        Then write \(L^{i-1}=\sum_{j=1}^d \alpha_j^{i-1} e_j e_j^\top\) and obtain
        \begin{align*}
        \bu_t^{K,m} = K\sum_{i=1}^m H\left(\sum_{j=1}^d \alpha_j^{i-1} e_j e_j^\top\right)H^{-1}\bw_{t-i}
        & = K\sum_{j=1}^d He_je_j^\top H^{-1} \sum_{i=1}^m \alpha_j^{i-1}\bw_{t-i}\,.
        \end{align*}
        Recall $\tilde{W}_{t-1:t-m} = [\bw_{t-1}, \dots, \bw_{t-m}] \in \mathbb{R}^{d \times m}$, define \(\boldsymbol{\mu}_\alpha=[1,\alpha,\dots,\alpha^{m-1}]\in\reals^m\) and get
        \begin{align*}
        \bu_t^{K,m} & = K\sum_{j=1}^d He_je_j^\top H^{-1} \tilde{W}_{t-1:t-m} \boldsymbol{\mu}_{\alpha_j} \\
        & = K\sum_{j=1}^d He_je_j^\top H^{-1} \tilde{W}_{t-1:t-m}\left(\sum_{i=1}^m \boldsymbol{\phi}_i\boldsymbol{\phi}_i^{\top}\right) \boldsymbol{\mu}_{\alpha_j} & \left[\sum_{i=1}^m \boldsymbol{\phi}_i\boldsymbol{\phi}_i^{\top} = \mathbb{I}_m\right] \\
        & = K\sum_{i=1}^m\left(\sum_{j=1}^d He_je_j^\top H^{-1}\boldsymbol{\phi}_i^{\top}\boldsymbol{\mu}_{\alpha_j}\right)\tilde{W}_{t-1:t-m} \boldsymbol{\phi}_i\,.
    \end{align*}
    Let $\pi_{h,m,\gamma,M^*}^{\sf SC}$ be the spectral controller with $M^*_i =\sigma_i^{-1/4}K H\left(\sum_{j=1}^d\boldsymbol{\phi}_i^{\top}\boldsymbol{\mu}_{\alpha_j}e_je_j^\top\right)H^{-1}$ for all $i \in [h]$. Note that we have
\begin{align*}
    \norm{ M^*_i } \leq \kappa^3 \cdot \max_{\ell \in [d]} \sigma_j^{-1/4} \ang{\boldsymbol{\phi}_j, \boldsymbol{\mu}_{\alpha_l}}\quad \forall 1 \leq j \leq m\,,
\end{align*}
and from the analysis of Lemma \ref{lem:in-prod-bound}, we have that $\sigma_j^{-1/4} \ang{\boldsymbol{\phi}_j, \mu(\alpha_l)}\le\sqrt{\frac{2}{\gamma}}$. Thus, \(\norm{M^*_{1:h}}\le \kappa^3\sqrt{\frac{2h}{\gamma}}\). Then,
    \begin{align*}
        \norm{\bu_t^{K,m} - \bu_t^{M^*}} & = \norm{K\sum_{i=h+1}^mH\left(\sum_{j=1}^d\boldsymbol{\phi}_i^{\top}\boldsymbol{\mu}_{\alpha_j}e_je_j^\top\right)H^{-1}\tilde{W}_{t-1:t-m} \boldsymbol{\phi}_i} & \\
        & \leq \kappa^3 W\sqrt{m}\sum_{i=h+1}^m\sum_{j=1}^d |\boldsymbol{\phi}_i^{\top}\boldsymbol{\mu}_{\alpha_j}| & \left[\norm{\tilde{W}_{t-1:t-m}} \leq W\sqrt{m}\right] \\
        &
        \le \frac{30\kappa^3W\sqrt{m}}{\sqrt{\gamma}}\log^{1/4}\left(\frac{2}{\gamma}\right)\sum_{i=h+1}^m\sum_{j=1}^d\exp\left(-\frac{\pi^2j}{16\log T}\right)
        &
        \left[\text{Lemma \ref{lem:in-prod-bound}}\right]
        \\
        &
        \le
        \frac{30\kappa^3W\sqrt{m}d}{\sqrt{\gamma}}\log^{1/4}\left(\frac{2}{\gamma}\right) \intop_h^\infty\exp\left(-\frac{\pi^2j}{16\log T}\right)dx
        \\
        &
        \leq
         \frac{50\kappa^3W\sqrt{m}d}{\sqrt{\gamma}}\log T\log^{1/4}\left(\frac{2}{\gamma}\right)\exp\left({-\frac{\pi^2h}{16\log T}}\right) \,,
         \end{align*}
         \begin{align*}
        \norm{\x_t^{M^*} - \x_t^{K,m}}  & = \norm{\sum_{i=1}^t A^{i-1}B(\bu_{t-i}^{M^*} - \bu_{t-i}^{K,m})} & \\
        & \leq \kappa_B\kappa^2\sum_{i=1}^t(1-\gamma)^{i-1} \norm{\bu_{t-i}^M - \bu_{t-i}^{K,m}} & [\mbox{Assumption \ref{asm:zero-stabilizable-system}}] \\
        & \leq \frac{50\kappa_B\kappa^5W\sqrt{m}d}{\gamma^{3/2}}\log T\log^{1/4}\left(\frac{2}{\gamma}\right)\exp\left({-\frac{\pi^2h}{16\log T}}\right) \,.  & 
    \end{align*}
    Using the fact that $\kappa_B\kappa^2/\gamma > 1$, we get a uniform bound:
    \begin{align*}
        \max\left\{\norm{\bx_t^{K,m} - \bx_t^{M^*}}, \norm{\bu_t^{K,m} - \bu_t^{M^*}}\right\} \leq \frac{50\kappa_B\kappa^5W\sqrt{m}d}{\gamma^{3/2}}\log T\log^{1/4}\left(\frac{2}{\gamma}\right)\exp\left({-\frac{\pi^2h}{16\log T}}\right)\,.
    \end{align*}
    Whenever $h \geq 2\log T \log\left(\frac{50\kappa_B\kappa^2\sqrt{m}d}{\sqrt{\gamma}}\log T \log\left(\frac{2}{\gamma}\right)\right)$, which is indeed the case for our choice of $\eps$ and $h$, this implies that $\norm{\x_t^{M^*} - \x_t^{K,m}}, \norm{\bu_t^{M^*} - \bu_t^{K,m}} \leq \kappa^3W/\gamma$. Hence, by triangle inequality $\norm{\x_t^{M^*}}, \norm{\bu_t^{M^*}} \leq 3\kappa^2W/\gamma$. Thus, the sum of costs is bounded as follows:
    \begin{align*}
        \sum_{t=1}^T \left|c_t(\x_t^{M^*}, \bu_t^{M^*}) - c_t(\x_t^{K,m}, \bu_t^{K,m})\right| & \leq \frac{3G\kappa^3W}{\gamma}\sum_{t=1}^T\left(\norm{\x_t^{M^*} - \x_t^{K,m}} + \norm{\bu_t^{M^*} - \bu_t^{K,m}}\right) & \\
        & \leq \frac{300G\kappa_B\kappa^8W^2\sqrt{m}d}{\gamma^{5/2}}\log T\log^{1/4}\left(\frac{2}{\gamma}\right)\exp\left({-\frac{\pi^2h}{16\log T}}\right)\cdot T & \\
        & \leq \frac{\epsilon}{2} T\,. \quad \quad \quad \quad \quad \quad \quad \quad \quad \quad [\mbox{choice of }h]
    \end{align*}
\end{proof}

We conclude the proof of Lemma \ref{lem:approx} using Lemmas \ref{lem:approx-oloc} and \ref{lem:approx-spectral} as follows:\\

\noindent \textit{Proof of Lemma \ref{lem:approx}}: $K$ is a $(\kappa,\gamma)-$diagonalizably stable linear policy and $\eps \in (0, 1)$. Hence, by Lemma $\ref{lem:approx-oloc}$, for the choice of $m = \left\lceil\frac{1}{\gamma}\log\left(\frac{8G\kappa_B\kappa^8W^2}{\eps\gamma^3}\right)\right\rceil$, we have that 
\[
\sum_{t=1}^T \left|c_t(\x_t^{K,m}, \bu_t^{K,m}) - c_t(\x_t^K, \bu_t^K)\right| \leq \frac{\epsilon}{2} T\,,\quad \quad\quad \quad \norm{\x_t^{K,m}}, \norm{\bu_t^{K,m}} \leq \frac{2\kappa^3W}{\gamma}\,.
\]
Since $\norm{\x_t^{K,m}}, \norm{\bu_t^{K,m}} \leq \frac{2\kappa^3W}{\gamma}$, for any $h \geq 2\log T\log\left(\frac{600G\kappa_B \kappa^8W^2\sqrt{m}d}{\eps\gamma^{5/2}}\log T\log^{1/4}\left(\frac{2}{\gamma}\right)\right)$, there exists an $M \in \K$ such that:
\[
\sum_{t=1}^T \left|c_t(\x_t^{M}, \bu_t^{M}) - c_t(\x_t^{K,m}, \bu_t^{K,m})\right| \leq \frac{\epsilon}{2} T\,.
\]
In particular, since $2x \geq \lceil x \rceil$ for $x > 0.5$ and since $\frac{1}{\gamma}\log\left(\frac{8G\kappa_B\kappa^8W^2}{\eps\gamma^3}\right) \geq \log(8) > 0.5$, $m \leq \frac{2}{\gamma}\log\left(\frac{8G\kappa_B\kappa^8W^2}{\eps\gamma^3}\right)$. Thus, for $h \geq 2\log T\log\left(\frac{900G\kappa_B \kappa^8W^2d}{\eps\gamma^{3}}\log T\log^{1/4}\left(\frac{2}{\gamma}\right)\log^{1/2}\left(\frac{8G\kappa_B\kappa^8W^2}{\epsilon\gamma^3}\right)\right)$ there exists such an $M \in \K$. Using triangle inequality, we get:
\[
\sum_{t=1}^T \left|c_t(\x_t^{M}, \bu_t^{M}) - c_t(\x_t^{K}, \bu_t^{K})\right| \leq \epsilon T\,.
\]
\hfill $\square$

Finally, we derive the bound on the state and control obtained by following a linear policy from $\mathcal{S}$, which we use in the proof of Lemma \ref{lem:approx-oloc}.

\begin{lemma}\label{lem:bound-xu-linear-diam}
    For any $K \in \mathcal{S}$, the corresponding states \(\x_t^K\) and control inputs \(\bu_t^K\) are bounded by
    \begin{align*}
        \norm{\bx^K_t}\le\frac{\kappa^2W}{\gamma}\,,\quad\norm{\bu^K_t}\le\frac{\kappa^3W}{\gamma}\,.
    \end{align*}
\end{lemma}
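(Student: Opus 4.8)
\textbf{Proof plan for Lemma~\ref{lem:bound-xu-linear-diam}.}
The plan is to unroll the closed-loop dynamics under the linear policy $\bu_t = K\x_t$ and use the diagonalizable stability structure of $A+BK$ to geometrically bound the state, then bound the control by $\|K\|\le\kappa$ times the state bound. First I would recall from \eqref{eq:lin_input_state} that under $\bu_t = K\x_t$ the state satisfies $\x_t^K = \sum_{i=1}^t (A+BK)^{i-1}\bw_{t-i}$, since substituting the feedback law collapses the two sums in \eqref{eq:lin_input_state} into a single closed-loop convolution. (Alternatively one verifies this directly: $\x_{t+1} = (A+BK)\x_t + \bw_t$ with $\x_0 = 0$.)

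Next I would use Definition~\ref{defn:diag_k_y_stable_lin_policies}: write $A+BK = HLH^{-1}$ with $L$ diagonal, $\|L\|\le 1-\gamma$, and $\|K\|,\|H\|,\|H^{-1}\|\le\kappa$. Then $(A+BK)^{i-1} = HL^{i-1}H^{-1}$, so by submultiplicativity $\|(A+BK)^{i-1}\| \le \|H\|\,\|L^{i-1}\|\,\|H^{-1}\| \le \kappa^2 (1-\gamma)^{i-1}$. Combining with $\|\bw_{t-i}\|\le W$ (Assumption~\ref{assm:bounded-system}) and the triangle inequality,
\begin{align*}
\norm{\x_t^K} \le \sum_{i=1}^t \norm{(A+BK)^{i-1}}\,\norm{\bw_{t-i}} \le \kappa^2 W \sum_{i=1}^t (1-\gamma)^{i-1} \le \kappa^2 W \sum_{i=0}^{\infty}(1-\gamma)^{i} = \frac{\kappa^2 W}{\gamma}\,.
\end{align*}
Finally, since $\bu_t^K = K\x_t^K$, we get $\norm{\bu_t^K} \le \norm{K}\,\norm{\x_t^K} \le \kappa \cdot \frac{\kappa^2 W}{\gamma} = \frac{\kappa^3 W}{\gamma}$, which is the claimed bound.

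This argument is essentially routine; there is no serious obstacle. The only point requiring a little care is justifying the closed-loop form $\x_t^K = \sum_{i=1}^t (A+BK)^{i-1}\bw_{t-i}$ — one should either cite \eqref{eq:lin_input_state} together with the feedback substitution, or give the one-line induction on $\x_{t+1} = (A+BK)\x_t + \bw_t$ with $\x_0=0$ — and making sure the geometric series is bounded by its infinite tail using $\gamma > 0$. Everything else follows from submultiplicativity of the operator norm and the stated bounds in Definition~\ref{defn:diag_k_y_stable_lin_policies} and Assumption~\ref{assm:bounded-system}.
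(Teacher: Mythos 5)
Your proposal is correct and follows essentially the same route as the paper: unroll the closed-loop state as $\x_t^K=\sum_{i=1}^t(A+BK)^{i-1}\bw_{t-i}$, bound $\|(A+BK)^{i-1}\|\le\kappa^2(1-\gamma)^{i-1}$ via the diagonalization in Definition~\ref{defn:diag_k_y_stable_lin_policies}, sum the geometric series, and then use $\bu_t^K=K\x_t^K$ with $\|K\|\le\kappa$. Your version is in fact slightly more explicit than the paper's about where the $\kappa^2(1-\gamma)^{i-1}$ bound comes from.
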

\begin{proof}
    As in many other parts of this paper, we first write the states as a linear transformation of the disturbances:
    \begin{align*}
        \norm{\x_t^K}&=\norm{\sum_{i=1}^t\left(A+BK\right)^{i-1}\bw_{t-i}}
        \\
        &
        \le\sum_{i=0}^t\left(1-\gamma\right)^{i}\norm{\bw_{t-i-1}}
        \\
        &
        \le
        \frac{\kappa^2W}{\gamma}\,.
    \end{align*}
    Then, since \(\bu_t^K=K\x_t^K\) with \(\norm{K}\le\kappa\), we obtain the result.
\end{proof}
\section{Learning Results}

\subsection{Convexity of loss function and feasibility set}

To conclude the analysis, we first show that the feasibility set $\K$ is convex and the loss functions are convex with respect to the variables $M_{1:h}$. This follows since the states and the controls are linear transformations of the variables.
\begin{lemma}\label{lem:convex-set}
    The set
    $\K=\left\{M_{1:h}\in\reals^{h\times n\times d}\mid\norm{\x_t^M},\norm{\bu_t^M}\le\frac{3\kappa^3W}{\gamma}
,\norm{M_{1:h}}\le\kappa^3  \sqrt{\frac{2h}{\gamma}}\right\}$
is convex.
\end{lemma}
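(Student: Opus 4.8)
The plan is to verify that $\K$ is an intersection of convex sets, hence convex. The set $\K$ is cut out by three families of constraints: $\norm{\x_t^M} \le \frac{3\kappa^3 W}{\gamma}$ for all $t$, $\norm{\bu_t^M} \le \frac{3\kappa^3 W}{\gamma}$ for all $t$, and $\norm{M_{1:h}} \le \kappa^3\sqrt{2h/\gamma}$. The last constraint is a norm ball in the parameter space $\reals^{h \times n \times d}$, which is trivially convex. So the crux is to show that each map $M_{1:h} \mapsto \x_t^M$ and $M_{1:h} \mapsto \bu_t^M$ is \emph{affine} (in fact linear, since $\bx_0 = 0$ and $\bw_t = 0$ for $t < 0$); then each constraint $\norm{\x_t^M} \le \text{const}$ is the preimage of a ball under an affine map, hence convex, and a finite (or arbitrary) intersection of convex sets is convex.

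First I would write out $\bu_t^M$ and $\x_t^M$ explicitly as functions of the parameters. By Definition \ref{def:spectral-class}, the control at time $t$ under the spectral policy is $\bu_t^M = \sum_{i=1}^h \sigma_i^{1/4} M_i \tilde{W}_{t-1:t-m}\boldsymbol{\phi}_i$, which is manifestly linear in the tuple $M_{1:h}$ since $\sigma_i, \tilde{W}_{t-1:t-m}, \boldsymbol{\phi}_i$ do not depend on $M$ (the disturbances $\bw_s$ that form $\tilde W$ are the true realized disturbances, independent of the policy parameters — this is the point of recording $\bw_t$ in line \ref{line:compute_disturb}). Then I would unroll the state recursion $\x_{t+1}^M = A\x_t^M + B\bu_t^M + \bw_t$ with $\x_0^M = 0$, yielding $\x_t^M = \sum_{s=0}^{t-1} A^{t-1-s}(B\bu_s^M + \bw_s)$, which is an affine function of $(\bu_0^M, \dots, \bu_{t-1}^M)$ and hence, by composition with the linear maps above, an affine (indeed linear plus the disturbance-dependent constant $\sum_s A^{t-1-s}\bw_s$) function of $M_{1:h}$. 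Composition of an affine map with a norm gives a convex function, so each sublevel set is convex.

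The final step is to assemble: $\K = \left(\bigcap_{t} \{ \norm{\x_t^M} \le \tfrac{3\kappa^3 W}{\gamma}\}\right) \cap \left(\bigcap_t \{\norm{\bu_t^M} \le \tfrac{3\kappa^3 W}{\gamma}\}\right) \cap \{\norm{M_{1:h}} \le \kappa^3\sqrt{2h/\gamma}\}$, an intersection of convex sets, therefore convex. I do not anticipate a genuine obstacle here; the only thing to be slightly careful about is making explicit that $\tilde{W}_{t-1:t-m}$ and the eigendata $(\sigma_i, \boldsymbol{\phi}_i)$ are constants with respect to the optimization variable, so that the maps are truly affine rather than merely "nice" — once that is stated, the rest is the standard fact that preimages of convex sets under affine maps are convex and intersections preserve convexity.
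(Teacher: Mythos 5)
Your argument is correct and matches the paper's proof, which likewise observes that $\x_t^M$ and $\bu_t^M$ are linear (affine) in $M_{1:h}$ and then invokes convexity of norms, convexity of sublevel sets, and closure of convexity under intersection. Your version simply makes explicit the unrolling of the state recursion and the fact that $\tilde{W}_{t-1:t-m}$ and the eigendata are constants with respect to $M_{1:h}$, which the paper leaves implicit.
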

\begin{proof}
    Since \(\x_t^M,\bu_t^M\) are linear in \(M_{1:h}\), from the convexity of the norm, the fact that the sublevel sets of a convex function is convex and that the intersection of convex sets is convex, we are done.
\end{proof}
\begin{lemma}\label{lem:convex-functions}
    The loss $\ell_t(M_{1:h})$ is convex in $M_{1:h}$.
\end{lemma}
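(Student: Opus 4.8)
The plan is to reduce the claim to the elementary fact that the composition of a convex function with an affine map is convex. The first step is to observe that, for every time step $\tau$, the control $\bu_\tau(M_{1:h})$ produced by the spectral controller is a \emph{linear} function of the parameters $M_{1:h}$. This is immediate from the defining formula $\bu_\tau(M_{1:h}) = \sum_{i=1}^h \sigma_i^{1/4} M_i \tilde{W}_{\tau-1:\tau-m}\boldsymbol{\phi}_i$: the scalars $\sigma_i$ and vectors $\boldsymbol{\phi}_i$ depend only on the fixed Hankel matrix $H$, and — crucially — the disturbance matrices $\tilde{W}_{\tau-1:\tau-m}$ are an exogenous adversarial sequence that does not depend on $M_{1:h}$. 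Hence each coordinate of $\bu_\tau(M_{1:h})$ is a fixed linear functional of $M_{1:h}$, with no recursion needed (unlike a state-feedback policy, the spectral controller reads off past disturbances directly).

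Next I would show that $\x_t(M_{1:h})$ is affine in $M_{1:h}$. Unrolling the dynamics with $\x_0 = 0$ gives $\x_t(M_{1:h}) = \sum_{i=1}^t A^{i-1}\bw_{t-i} + \sum_{i=1}^t A^{i-1} B\,\bu_{t-i}(M_{1:h})$ (cf.\ \eqref{eq:lin_input_state}). The first sum is a constant vector independent of $M_{1:h}$, and the second is a fixed linear combination of the controls $\bu_{t-i}(M_{1:h})$, each linear in $M_{1:h}$ by the previous step; therefore $\x_t(\cdot)$ is affine. Consequently the map $M_{1:h}\mapsto \big(\x_t(M_{1:h}),\bu_t(M_{1:h})\big)$ is affine.

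Finally, by Assumption~\ref{assm:lipschitz} the cost $c_t(\x,\bu)$ is convex jointly in $(\x,\bu)$, so $\ell_t(M_{1:h}) = c_t\big(\x_t(M_{1:h}),\bu_t(M_{1:h})\big)$ is the composition of a convex function with an affine map, hence convex in $M_{1:h}$. The only point that requires care — and the closest thing to an obstacle — is the observation that the disturbance stream is truly exogenous: it is recorded in line~\ref{line:compute_disturb} of Algorithm~\ref{alg:mainA} from the realized trajectory and is then treated as a fixed input when evaluating the counterfactual pair $(\x_t(M_{1:h}),\bu_t(M_{1:h}))$, so no hidden dependence on $M_{1:h}$ sneaks in through $\tilde{W}$. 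Once this is pinned down, the rest is routine.
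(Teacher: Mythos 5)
Your proposal is correct and follows essentially the same route as the paper's proof: unroll the dynamics to exhibit $\bu_t(M_{1:h})$ and $\x_t(M_{1:h})$ as linear (affine) functions of the parameters, then invoke convexity of $c_t$ composed with an affine map. Your explicit remark that the recorded disturbance stream is exogenous and carries no hidden dependence on $M_{1:h}$ is a worthwhile clarification the paper leaves implicit, but it does not change the argument.
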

\begin{proof}
The loss function $\ell_t$ is given by
\(
\ell_t(M_{1:h}) = c_t(\bx_t(M_{1:h}), \bu_t(M_{1:h})).
\) Since the cost $c_t$ is a convex function with respect to its arguments, we simply need to show that $\bx_t^M$ and $\bu_t^M$ depend linearly on $M_{1:h}$. The state is given by

\[
\bx_{t}^M = A\bx_{t-1}^M + B\bu_{t-1}^M + \bw_{t-1} = A\bx_{t-1}^M + B \left( \sum_{i=1}^{h} \sigma_i^{1/4} M_i \tilde{W}_{t-2:t-m-1} \boldsymbol{\phi}_i \right) + \bw_{t-1}\,.
\]
By induction, we can further simplify
\[
\bx_{t}^M = \sum_{i=1}^t A^{i-1}\bw_{t-i} + \sum_{i=1}^t A^{i-1}B\sum_{j=1}^h \sigma_j^{1/4}M_j \tilde{W}_{t-i-1:t-i-m} \boldsymbol{\phi}_j\,,
\]
which is a linear function of the variables. Similarly, the control $\bu_t$ is given by

\[
\bu_t^M = \sum_{i=1}^{h}\sigma_i^{1/4} M_i \tilde{W}_{t-1:t-m} \boldsymbol{\phi}_i\,.
\]
Thus, we have shown that $\bx_t(M_{1:h})$ and $\bu_t(M_{1:h})$ are linear transformations of $M_{1:h}$. A composition of convex and linear functions is convex, which concludes our Lemma.
\end{proof}

\subsection{Lipschitzness of $\ell_t(\cdot)$}
The following lemma states and proves the explicit lipschitz constant of $\ell_t(\cdot)$.
\begin{lemma}\label{lem:lipschitz-memory}
    For any $M_{1:h},M'_{1:h}\in\K$ it holds that,
    \begin{align*}
        \left|\ell_t(M_{1:h}) - \ell_t(M'_{1:h})\right| \leq \frac{6G\kappa_B\kappa^5W^2\sqrt{m}h}{\gamma^2} \log^{1/4}\left(\frac{2}{\gamma}\right) \norm{M_{1:h} - M_{1:h}^\prime}\,.
    \end{align*}
\end{lemma}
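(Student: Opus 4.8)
The plan is to reduce the Lipschitz estimate of $\ell_t$ to a Lipschitz estimate of the pair $(\bx_t^M, \bu_t^M)$ as functions of $M_{1:h}$, and then exploit the explicit linear formulas for the state and control derived in the proof of Lemma~\ref{lem:convex-functions}. First I would invoke Assumption~\ref{assm:lipschitz}: since any $M_{1:h} \in \K$ satisfies $\norm{\bx_t^M}, \norm{\bu_t^M} \le 3\kappa^3 W/\gamma \le \Diameter$ (this is exactly the defining constraint of $\K$, and here $\Diameter$ should be taken to be at least $3\kappa^3W/\gamma$), the cost $c_t$ has gradient bounded by $\CostGradBound \Diameter$ on the relevant region, hence is $\CostGradBound\Diameter$-Lipschitz there. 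Consequently
\[
\left|\ell_t(M_{1:h}) - \ell_t(M'_{1:h})\right| \le \frac{3G\kappa^3 W}{\gamma}\left(\norm{\bx_t^M - \bx_t^{M'}} + \norm{\bu_t^M - \bu_t^{M'}}\right),
\]
so it remains to bound each of the two differences by a constant times $\norm{M_{1:h} - M'_{1:h}}$.

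Next I would bound $\norm{\bu_t^M - \bu_t^{M'}}$. Using the formula $\bu_t^M = \sum_{i=1}^h \sigma_i^{1/4} M_i \tilde W_{t-1:t-m}\boldsymbol{\phi}_i$ from Lemma~\ref{lem:convex-functions}, linearity gives $\bu_t^M - \bu_t^{M'} = \sum_{i=1}^h \sigma_i^{1/4}(M_i - M_i')\tilde W_{t-1:t-m}\boldsymbol{\phi}_i$. I would then bound the norm termwise: $\norm{\tilde W_{t-1:t-m}} \le W\sqrt m$ (as used in Lemma~\ref{lem:approx-spectral}), $\norm{\boldsymbol{\phi}_i} = 1$, and the spectral weights $\sigma_i^{1/4}$ are controlled by the top eigenvalue of $H$; since $H_{11} = 1-\gamma \le 1$ and $H$ is PSD with small dimension, $\sigma_1 \le \Tr(H) = \sum_{k}\frac{(1-\gamma)^{2k-1}}{2k-1} \le \frac{1}{2}\log\frac{1}{\gamma} + O(1) \lesssim \log(2/\gamma)$, so $\sigma_i^{1/4} \le \sigma_1^{1/4} \lesssim \log^{1/4}(2/\gamma)$ for all $i$. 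Summing over $i \in [h]$ and using $\sum_i \norm{M_i - M_i'} \le \sqrt h \,\norm{M_{1:h} - M'_{1:h}}$ (Cauchy–Schwarz over the $h$ blocks) yields $\norm{\bu_t^M - \bu_t^{M'}} \lesssim W\sqrt m \cdot h \cdot \log^{1/4}(2/\gamma)\,\norm{M_{1:h}-M'_{1:h}}$ up to a constant.

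For $\norm{\bx_t^M - \bx_t^{M'}}$ I would use the unrolled identity $\bx_t^M = \sum_{i=1}^t A^{i-1}\bw_{t-i} + \sum_{i=1}^t A^{i-1}B\sum_{j=1}^h \sigma_j^{1/4}M_j \tilde W_{t-i-1:t-i-m}\boldsymbol{\phi}_j$; the disturbance term cancels in the difference, leaving $\bx_t^M - \bx_t^{M'} = \sum_{i=1}^t A^{i-1}B\,(\bu_{t-i}^M - \bu_{t-i}^{M'})$. Under Assumption~\ref{asm:zero-stabilizable-system} (so $A = HLH^{-1}$ with $\norm{L}\le 1-\gamma$, $\norm{H},\norm{H^{-1}}\le\kappa$, and $\norm B \le \kappa_B$), we have $\norm{A^{i-1}B} \le \kappa_B\kappa^2(1-\gamma)^{i-1}$, exactly as in Lemma~\ref{lem:approx-spectral}. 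Combining this with the per-step control bound from the previous paragraph and summing the geometric series $\sum_{i\ge1}(1-\gamma)^{i-1} = 1/\gamma$ gives $\norm{\bx_t^M - \bx_t^{M'}} \lesssim \frac{\kappa_B\kappa^2}{\gamma}\cdot W\sqrt m \cdot h\cdot\log^{1/4}(2/\gamma)\,\norm{M_{1:h}-M'_{1:h}}$. Plugging both bounds into the displayed Lipschitz inequality for $\ell_t$, and absorbing the smaller control-only term into the (dominant) state term via $\kappa_B\kappa^2/\gamma \ge 1$, produces the factor $\frac{G\kappa_B\kappa^5 W^2\sqrt m\,h}{\gamma^2}\log^{1/4}(2/\gamma)$ up to a numerical constant, which one checks is at most $6$. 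The only mildly delicate point is the uniform bound on the spectral weights $\sigma_i^{1/4}$ — everything else is bookkeeping of norms and a geometric sum — and I expect that is where most of the care (and the $\log^{1/4}(2/\gamma)$ factor) enters; it should follow cleanly from the trace bound on $H$ together with the estimates already developed in the analysis supporting Lemma~\ref{lem:in-prod-bound}.
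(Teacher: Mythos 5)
Your proposal is correct and follows essentially the same route as the paper's proof: reduce to Lipschitzness of $(\bx_t^M,\bu_t^M)$ in $M_{1:h}$ via Assumption~\ref{assm:lipschitz} with diameter $3\kappa^3W/\gamma$ from the definition of $\K$, bound the control difference using $\sigma_j^{1/4}\le\log^{1/4}(2/\gamma)$ (the trace bound underlying Lemma~\ref{lem:Z-decay}) together with $\norm{\tilde{W}_{t-1:t-m}}\le W\sqrt{m}$, and propagate to the state through the geometric series $\sum_{i}\kappa^2\kappa_B(1-\gamma)^{i-1}\le\kappa^2\kappa_B/\gamma$. The only cosmetic difference is that you invoke Cauchy--Schwarz to get a factor $\sqrt{h}$ where the paper uses the cruder termwise bound $\norm{M_j-M_j'}\le\norm{M_{1:h}-M_{1:h}'}$, which yields the factor $h$ appearing in the statement; either suffices.
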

\begin{proof}
    Taking the difference in controls:
    \begin{align*}
        \norm{\bu_t(M_{1:h}) - \bu_t(M'_{1:h})}  = \norm{\sum_{j=1}^h \sigma_j^{1/4}(M_j - M'_j) \tilde{W}_{t-1:t-m} \phi_j} \leq W\sqrt{m}\log^{1/4}\left(\frac{2}{\gamma}\right)\sum_{j=1}^h\norm{M_j - M'_j}\,.
    \end{align*}
    By unrolling the recursion, we have:
    \begin{align*}
        \bx_t(M_{1:h}) = \sum_{i=1}^t A^{i-1}\bw_{t-i} + \sum_{i=1}^t A^{i-1}B\sum_{j=1}^h \sigma_j^{1/4}M_j \tilde{W}_{t-i-1:t-i-m} \phi_j\,,
    \end{align*}
    Taking the difference, 
    \begin{align*}
        \norm{\bx_t(M_{1:h}) - \bx_t(M'_{1:h})} & = \norm{\sum_{i=1}^t A^{i-1}B\sum_{j=1}^h \sigma_j^{1/4}(M_j - M'_j) \tilde{W}_{t-i-1:t-i-m} \phi_j} \\
        & \leq \sum_{i=1}^t \norm{A^{i-1}}\norm{B}\sum_{j=1}^h |\sigma_j|^{1/4}\norm{M_j - M'_j} \norm{\tilde{W}_{t-i-1:t-i-m}} \\
        & \leq \left(\sum_{i=1}^t\kappa^2(1-\gamma)^{i-1}\kappa_B W\sqrt{m}\log^{1/4}\left(\frac{2}{\gamma}\right)
        \right)\sum_{j=1}^h \norm{M_j - M'_j} \\
        & \leq W\sqrt{m}\log^{1/4}\left(\frac{2}{\gamma}\right)\left(\frac{\kappa^2\kappa_B}{\gamma}\right)\sum_{j=1}^h \norm{M_j - M'_j}\,.
    \end{align*}
    Using the fact that $\kappa_B\kappa^2/\gamma > 1$, we get a uniform bound:
    \begin{align*}
        \max\left\{\norm{\bx_t(M_{1:h}) - \bx_t(M'_{1:h})}, \norm{\bu_t(M_{1:h}) - \bu_t(M'_{1:h})}\right\} \leq \frac{\kappa_B\kappa^2W\sqrt{m}}{\gamma}\log^{1/4}\left(\frac{2}{\gamma}\right)\sum_{j=1}^h \norm{M_j - M'_j}\,.
    \end{align*}
    Using the lipschizness of the cost function from Assumption \ref{assm:lipschitz}, the definition of $\K$, we have
    \begin{align*}
        \left|\ell_t(M_{1:h}) - \ell_t(M'_{1:h})\right| & = \left|c_t(\bx_t(M_{1:h}), \bu_t(M_{1:h})) - c_t(\bx_t(M'_{1:h}), \bu_t(M'_{1:h}))\right| \\
        & \leq \frac{3G\kappa^3W}{\gamma}\left(\norm{\bx_t(M_{1:h}) - \bx_t(M'_{1:h})} + \norm{\bu_t(M_{1:h}) - \bu_t(M'_{1:h})}\right) \\
        & \leq \frac{6G\kappa_B\kappa^5W^2\sqrt{m}}{\gamma^2} \log^{1/4}\left(\frac{2}{\gamma}\right) \sum_{j=1}^h \norm{M_j - M'_j}\,.
    \end{align*}
    Finally, we upper bound each \(\norm{M_j - M'_j}\) by \(\norm{M_{1:h} - M'_{1:h}}\) to get the result.
    
\end{proof}

\subsection{Loss functions with memory}\label{appendix:memory}

The actual loss $c_t$ at time $t$ is not calculated on $\bx_t(M_{1:h}^t)$, but rather on the true state $\bx_t$, which in turn depends on different parameters $M_{1:h}^i$ for various historical times $i < t$. Nevertheless, $c_t(\bx_t, \bu_t)$ is well approximated by $\ell_t(M_{1:h}^t)$, as stated in Lemma \ref{lem:memoryless-enough} and proven next.

\noindent \textit{Proof of Lemma \ref{lem:memoryless-enough}}:
By the choice of step size $\eta$, and by the computation of the lipschitz constant of $\ell_t$ w.r.t $M_{1:h}$ in Lemma \ref{lem:lipschitz-memory}, we have: 
\[
\eta = \frac{2\kappa^3}{L}\sqrt{\frac{2h}{\gamma T}}\,,
\]
where $L$ is the lipschitz constant of $\ell_t$ w.r.t $M_{1:h}$, computed in Lemma \ref{lem:lipschitz-memory}. Thus, for each $j \in [h]$,
\[
\|M_{j}^t - M_{j}^{t-i}\| \leq \|M_{1:h}^t - M_{1:h}^{t-i}\| \leq \sum_{s=t-i+1}^{t} \|M_{1:h}^s - M_{1:h}^{s-1}\| \leq i \eta L = 2i\kappa^3\sqrt{\frac{2h}{\gamma T}}\,.
\]
Observe that $\bu_t = \bu_t(M_{1:h}^t)$. We use the fact proved above to establish that $\bx_t$ and $\bx_t(M_{1:h}^t)$ are close. Observe that $\x_t$ and $\bx_t(M_{1:h}^t)$ can be written as
\[
\x_{t}(M_{1:h}^t) = \sum_{i=1}^t A^{i-1}\bw_{t-i} + \sum_{i=1}^t A^{i-1}B\sum_{j=1}^h \sigma_j^{1/4} M_j^t \tilde{W}_{t-i-1:t-i-m} \boldsymbol{\phi}_j\,,
\]
\[
\x_{t} = \sum_{i=1}^t A^{i-1}\bw_{t-i} + \sum_{i=1}^t A^{i-1}B\sum_{j=1}^h \sigma_j^{1/4} M_j^{t-i} \tilde{W}_{t-i-1:t-i-m} \boldsymbol{\phi}_j\,.
\]
Evaluating the difference,
\begin{align*}
\|\bx_{t} - \bx_t(M_{1:h}^t)\| & \leq \sum_{i=1}^{t} \|A^{i-1}\| \|B\| \sum_{j=1}^{h} |\sigma_j|^{1/4}\|M_j^{t-i} - M_j^{t}\| \|\tilde{W}_{t-i-1:t-i-m}\| \\
& \leq 2\kappa^5\kappa_B W\sqrt{m}\log^{1/4}\left(\frac{2}{\gamma}\right) \sqrt{\frac{2h}{\gamma T}}\sum_{i=1}^t i(1-\gamma)^{i-1} \\
& \leq \frac{2\kappa^5\kappa_B W\sqrt{mh}}{\gamma^{5/2} \sqrt{T}}\log^{1/4}\left(\frac{2}{\gamma}\right)\,.
\end{align*}
By definition, $\ell_t(M_{1:h}^t) = c_t(\bx_t(M_{1:h}^t), \bu_t(M_{1:h}^t))$, and by the definition of \(\K\) and the projection used in Algorithm \ref{alg:mainA} we have by Assumption \ref{assm:lipschitz}:
\begin{align*}
    \left|\ell_t(M_{1:h}^t) - c_t(\bx_t, \bu_t)\right| & = \left|c_t(\bx_t(M_{1:h}^t), \bu_t(M_{1:h}^t)) - c_t(\bx_t, \bu_t)\right|  \\
    & \leq \frac{3G\kappa^3W}{\gamma}\|\bx_t(M_{1:h}^t) - \bx_t\| \\
    & \leq \frac{6G\kappa_B\kappa^8W^2\sqrt{mh}}{\gamma^{7/2}\sqrt{T}}\log^{1/4}\left(\frac{2}{\gamma}\right)\,.
\end{align*}
\hfill $\square$
\section{Spectral Tail Bounds}
We use the following low-approximate rank property of positive semidefinite Hankel matrices, from \cite{beckermann2016singularvaluesmatricesdisplacement}:

\begin{lemma}\label{lem:sigma-geometric}[Corollary 5.4 in \cite{beckermann2016singularvaluesmatricesdisplacement}]
    Let $H_n$ be a PSD Hankel matrix of dimension $n$. Then,
    \begin{align*}
        \sigma_{j+2k}(H_n) \leq 16\left[\exp\left(\frac{\pi^2}{4\log(8\lfloor n/2 \rfloor/\pi)}\right)\right]^{-2k+2} \sigma_j(H_n)\,.
    \end{align*}
\end{lemma}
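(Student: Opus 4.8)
This statement is exactly Corollary~5.4 of \cite{beckermann2016singularvaluesmatricesdisplacement}, so the plan is to recall \emph{why} it holds rather than to reprove the full machinery. The argument runs through the general theory of singular-value decay for matrices with displacement structure. The first step is to exhibit a Sylvester (displacement) equation for $H_n$ whose two generator matrices are normal with \emph{disjoint} spectra; the decay rate will then be read off from a Zolotarev number. For a general Hankel matrix the obvious choice $Z H_n - H_n Z^{\top}$, with $Z$ the down-shift, has rank at most $2$ but $Z$ is nilpotent, so $\Lambda(Z)=\Lambda(Z^{\top})=\{0\}$ and no separation is available — this is precisely where positive semidefiniteness must be used. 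Concretely, for the matrices of interest here, $H_{ij}=(1-\gamma)^{i+j-1}/(i+j-1)$, one has the moment representation $c^{\top}H_n c=\int_0^{1-\gamma}\big(\sum_k c_k t^{k-1}\big)^2\,dt$, and in fact $1/(i+j-1)$ is literally a Cauchy matrix $1/(x_i-y_j)$ with $x_i=i-\tfrac12$, $y_j=\tfrac12-j$, so that $\operatorname{diag}(x)H-H\operatorname{diag}(y)=\mathbf 1\mathbf 1^{\top}$ with $\Lambda(\operatorname{diag}(x))\subset(0,\infty)$, $\Lambda(\operatorname{diag}(y))\subset(-\infty,0)$ — disjoint, with a separation quantified only by $n$. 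The general PSD Hankel case is reduced (through the moment/quadrature representation and the associated orthogonal polynomials) to a rank-$\le 2$ displacement equation of the same flavour, with normal generators whose spectra sit in two disjoint sets $E,F$ whose relative position is controlled purely by $n$.

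Given such a displacement equation, the general theorem on displacement-structured matrices from \cite{beckermann2016singularvaluesmatricesdisplacement} gives
\[
\sigma_{j+2k}(H_n)\;\le\;Z_k(E,F)\,\sigma_j(H_n),
\qquad
Z_k(E,F)=\inf_{r\in\mathcal{R}_{k,k}}\frac{\sup_{z\in E}|r(z)|}{\inf_{z\in F}|r(z)|},
\]
where $\mathcal{R}_{k,k}$ is the family of rational functions of type $(k,k)$ and $Z_k$ is the associated Zolotarev number; the factor $2$ in the index $j+2k$ is the displacement rank, and passing between $H_n$ and the congruent structured representation is harmless since an orthogonal change of polynomial basis preserves singular values. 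It then remains to bound $Z_k(E,F)$ for the specific configuration at hand — two ``opposite'' sets separated at scale $1/n$. This is the classical Zolotarev two-interval problem (equivalently, after a Cayley transform, two disjoint circular arcs), solved explicitly via elliptic functions / conformal mapping of the associated condenser; the sharp estimate has the form $Z_k(E,F)\le 16\,\rho^{-2k+2}$ with $\rho=\exp\!\big(\pi^2/(4\log(8\lfloor n/2\rfloor/\pi))\big)$, the $\log n$ in the denominator being the conformal modulus of the quadrilateral that degenerates as the two pieces of the spectrum crowd together. Substituting this bound into the displacement inequality yields the claimed estimate.

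The main obstacle is the combination of the first step with the Zolotarev estimate: producing, for a PSD Hankel matrix, a low-rank displacement equation whose generators are normal with spectra that are \emph{both} separated \emph{and} have a separation quantifiable purely in terms of $n$ — this is the genuinely Hankel-specific content and the only place positive semidefiniteness enters — and then extracting the \emph{sharp}, $n$-dependent Zolotarev bound $16\rho^{-2k+2}$ rather than a crude geometric one. Once those two ingredients are in place, the remaining steps (invoking the abstract rank-$\nu$ displacement theorem and the bookkeeping that restores the constant $16$ and the shift $-2k+2$) are routine, and one simply cites the corollary.
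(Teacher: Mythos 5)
The paper offers no proof of this lemma --- it is imported verbatim as Corollary~5.4 of \cite{beckermann2016singularvaluesmatricesdisplacement} --- and your proposal likewise defers to that reference, so the two take the same route. Your sketch of the underlying displacement/Zolotarev machinery is essentially faithful; the one imprecision is that in Beckermann--Townsend the constant $16$ and the exponent $-2k+2$ arise from factoring the PSD Hankel matrix as $W^{\ast}W$ with $W$ a real (column-scaled) Vandermonde/Krylov matrix and then squaring the rank-$2$ displacement bound of the form $4\rho^{-k+1}$ for $\sigma_{j+2k}(W)$, rather than from a Zolotarev estimate $Z_k(E,F)\le 16\rho^{-2k+2}$ applied to a displacement equation for $H_n$ itself.
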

\noindent Define the matrix \begin{align*}
    H_m=\intop_0^{1-\gamma}\mu_\alpha\mu_\alpha^\top d\alpha\,,
\end{align*}
where \(\mu_\alpha=[1,\alpha,\dots,\alpha^{m-1}]\in\reals^m\), and note that \((H_m)_{i,j}=\frac{(1-\gamma)^{i+j-1}}{i+j-1}\). In particular, this is a PSD Hankel matrix of dimension $m$. We prove the following additional properties related to it:
\begin{lemma}
\label{lem:Z-decay}
Let $\sigma_j$ be the $j^{\sf th}$ top singular value of $H_m$. Then, for all $T \geq 10$, we have
\[\sigma_j \leq 156800\log\left(\frac{2}{\gamma}\right)\cdot \exp\left(-\frac{\pi^2j}{4\log T}\right) \leq \frac{1}{2}\log\left(\frac{2}{\gamma}\right)\,.\]
\end{lemma}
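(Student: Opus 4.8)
We must show both that $\sigma_j$ is bounded by the exponential term $156800\log(2/\gamma)\exp(-\pi^2 j/(4\log T))$ and, more simply, that $\sigma_j\le\tfrac12\log(2/\gamma)$. The plan is to obtain the second from the size of $H_m$ and the first from the low-rank structure of PSD Hankel matrices (Lemma \ref{lem:sigma-geometric}).

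Size bound. Since $H_m\succeq 0$, its eigenvalues are nonnegative, so $\sigma_1=\|H_m\|\le\Tr(H_m)=\sum_{i=1}^m\frac{(1-\gamma)^{2i-1}}{2i-1}$. Extending the sum to $\infty$ only increases it, and the resulting series equals $\operatorname{arctanh}(1-\gamma)=\tfrac12\log\frac{2-\gamma}{\gamma}\le\tfrac12\log\frac2\gamma$. As $\sigma_j\le\sigma_1$ for all $j$, this settles the second inequality. (Equivalently, $\|H_m\|=\sup_{\|v\|=1}\int_0^{1-\gamma}\langle v,\mu_\alpha\rangle^2\,d\alpha$ together with $\langle v,\mu_\alpha\rangle^2\le\|\mu_\alpha\|^2\le(1-\alpha^2)^{-1}$ gives the same value.)

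Decay bound. $H_m$ is a PSD Hankel matrix of dimension $m$, so I would invoke Lemma \ref{lem:sigma-geometric} with $n=m$. Write $j=j_0+2k$ with $j_0\in\{1,2\}$ and $k=\lfloor(j-1)/2\rfloor$, and set $\rho=\exp(\pi^2/(4\log(8\lfloor m/2\rfloor/\pi)))>1$. The lemma gives $\sigma_j\le 16\,\rho^{-2k+2}\sigma_{j_0}\le 16\,\rho^{-2k+2}\sigma_1$. Since $2k\ge j-2$, we get $\rho^{-2k+2}\le\rho^4\rho^{-j}$, and bounding $\log(8\lfloor m/2\rfloor/\pi)\le\log T$ converts $\rho^{-j}$ into $\exp(-\pi^2 j/(4\log T))$. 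Substituting $\sigma_1\le\tfrac12\log\frac2\gamma$ and absorbing $16\rho^4$ into the constant yields $\sigma_j\le 156800\log\frac2\gamma\cdot\exp(-\pi^2 j/(4\log T))$.

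Obstacles. The $\operatorname{arctanh}$ evaluation and the parity split are routine. Two things need care. First, $16\rho^4$ is a harmless constant only when $8\lfloor m/2\rfloor/\pi$ is bounded away from $1$; for the finitely many small $m$ where this fails, one instead uses the size bound and observes that $156800\log\frac2\gamma\exp(-\pi^2 j/(4\log T))$ already exceeds $\tfrac12\log\frac2\gamma\ge\sigma_j$ for all $j\le m$ when $T\ge 10$, so the claim is trivial there. Second, and more importantly, one must reconcile the matrix dimension $m$ inside the logarithm of Lemma \ref{lem:sigma-geometric} with the $\log T$ in the target: the step $\log(8\lfloor m/2\rfloor/\pi)\le\log T$ requires $m$ to be at most a constant multiple of $T$, which is the regime in which the lemma is applied (and which the choice of $m$ in Theorem \ref{thm:main} meets when $T$ is large enough). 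Pinning this down --- or, alternatively, bounding $\sigma_j(H_m)$ uniformly in $m$ through the limiting operator $\int_0^{1-\gamma}\mu_\alpha\mu_\alpha^\top\,d\alpha$ on $\ell^2$, whose eigenvalues dominate those of $H_m$ by Cauchy interlacing --- is the main obstacle I would want to handle cleanly.
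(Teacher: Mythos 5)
Your proof is correct and follows essentially the same route as the paper's own: the trace/$\operatorname{arctanh}$ computation giving $\sigma_j\le\tfrac12\log(2/\gamma)$, followed by the PSD-Hankel decay of Lemma \ref{lem:sigma-geometric} with a parity split $j=j_0+2k$ and absorption of $16\rho^4$ into the numerical constant. The obstacle you flag at the end is genuine rather than an artifact of your write-up: the paper's proof justifies replacing $\log(8\lfloor\cdot/2\rfloor/\pi)$ by $\log T$ via the condition ``$8\lfloor T/2\rfloor/\pi>T$,'' which concerns $T$ rather than the actual matrix dimension $m$ (and, as stated, even points the resulting inequality the wrong way for an upper bound), whereas the condition actually needed is $8\lfloor m/2\rfloor/\pi\le T$ --- satisfied in the regime of Theorem \ref{thm:main}, where $m=O(\gamma^{-1}\log T)\ll T$, exactly as you observe, and handled for small $m$ by your triviality fallback.
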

\begin{proof}
We begin by noting that for any $j$, \[
    \sigma_j \leq \Tr(H_m) 
    = \sum_{i=1}^m \frac{\left(1-\gamma\right)^{2i-1}}{2i - 1}
    \leq
    (1-\gamma)\sum_{i=0}^\infty \frac{(1-\gamma)^{2i}}{2i+1} 
    =
    (1-\gamma)\frac{1}{2}\log\left(\frac{2-\gamma}{\gamma}\right) \leq \frac{1}{2}\log\left(\frac{2}{\gamma}\right)
    \,.
\]
Now, since $T \geq 10$ implies $8\lfloor T/2 \rfloor / \pi > T$, we have by Lemma \ref{lem:sigma-geometric} that
\begin{align*}
\sigma_{2 + 2k} \leq \sigma_{1 + 2k} &< 8\log\left(\frac{2}{\gamma}\right) \cdot \bra{ \exp\pa{ \frac{\pi^2}{2 \log T} } }^{-k+1} \\
&< 1120\log\left(\frac{2}{\gamma}\right) \cdot \exp\pa{ -\frac{\pi^2 k}{2 \log T} }.
\end{align*}
Thus, we have that for all $j$,
\begin{align*}
\sigma_j < 1120\log\left(\frac{2}{\gamma}\right) \cdot \exp\pa{ -\frac{\pi^2(j-2)}{2 \log T} } < 156800\log\left(\frac{2}{\gamma}\right) \cdot \exp\pa{ -\frac{\pi^2 j}{2 \log T} }\,.
\end{align*}
\end{proof}

\begin{lemma}
    For all $T \in \mathbb{N}$ and $0 \leq \alpha \leq 1 - \gamma$, we have:
    \begin{enumerate}
        \item $\norm{\mu_\alpha}^2 \leq 1/\gamma\,,$
        \item $\left|\frac{d}{d\alpha}\norm{\mu_\alpha}^2\right| \leq 2/\gamma^2\,.$
    \end{enumerate}
\end{lemma}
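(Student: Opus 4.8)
The plan is to write everything in terms of the explicit geometric sum $\norm{\mu_\alpha}^2 = \sum_{i=0}^{m-1}\alpha^{2i}$ and to compare the (finite) sum and its derivative against the corresponding infinite geometric series. The first step I would take is to record the elementary inequality that turns the hypothesis $0\le\alpha\le 1-\gamma$ into a usable lower bound on $1-\alpha^2$: since $0<\gamma\le 2/3<1$ we have $\gamma^2\le\gamma$, hence $\alpha^2\le(1-\gamma)^2 = 1-2\gamma+\gamma^2\le 1-\gamma$, and therefore $1-\alpha^2\ge\gamma$. This is the only slightly non-obvious manipulation in the whole argument.

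For part (1), I would note that every term $\alpha^{2i}$ is nonnegative (as $\alpha\ge 0$), so truncating the series at $m$ only decreases it:
\[
\norm{\mu_\alpha}^2 = \sum_{i=0}^{m-1}\alpha^{2i} \le \sum_{i=0}^{\infty}\alpha^{2i} = \frac{1}{1-\alpha^2} \le \frac{1}{\gamma}\,,
\]
using $1-\alpha^2\ge\gamma$ from the first step. For part (2), I would differentiate the polynomial $\alpha\mapsto\norm{\mu_\alpha}^2$ termwise, $\frac{d}{d\alpha}\norm{\mu_\alpha}^2 = \sum_{i=1}^{m-1}2i\,\alpha^{2i-1}$, observe that this is a sum of nonnegative terms (again because $\alpha\ge 0$), so its absolute value equals the sum itself, and then bound the finite sum by the infinite one. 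Differentiating the identity $\sum_{i=0}^{\infty}\alpha^{2i}=(1-\alpha^2)^{-1}$ gives $\sum_{i=1}^{\infty}2i\,\alpha^{2i-1} = \frac{2\alpha}{(1-\alpha^2)^2}$, and since $\alpha\le 1-\gamma\le 1$ and $1-\alpha^2\ge\gamma$ this is at most $\frac{2}{\gamma^2}$, giving the claim.

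I do not expect any real obstacle here: both bounds reduce to standard geometric-series estimates once $1-\alpha^2\ge\gamma$ is in hand, and the passage from the finite sum ($m<\infty$) to the infinite sum is justified simply by nonnegativity of all the terms (and of their derivatives). The proof is essentially two lines per item.
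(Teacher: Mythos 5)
Your proof is correct and follows essentially the same route as the paper: both arguments reduce to geometric-series estimates together with the bound $1-\alpha^2 \ge 1-(1-\gamma)^2 \ge \gamma$. Your passage from the finite sum to the infinite one via termwise nonnegativity is a slightly more direct phrasing of the paper's "monotone in $m$, so the supremum is the limit" step, but the substance is identical.
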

\begin{proof}
    The first inequality can obtained by evaluating:
    \begin{align*}
        \norm{\mu_\alpha}^2 = \sum_{i=1}^m \alpha^{2i-2} = \frac{1 - \alpha^{2m}}{1 - \alpha^2} \leq \frac{1}{1 - (1 - \gamma)^2} \leq \frac{1}{\gamma}\,.
    \end{align*}
    To obtain the second inequality, observe that in the summation form, $\left|\frac{d}{d\alpha}\norm{\mu_\alpha}^2\right| = \sum_{i=2}^m(2i-2)\alpha^{2i-3}$ and hence it monotonically increases with $m$. Thus, if the limit exists for $\left|\frac{d}{d\alpha}\norm{\mu_\alpha}^2\right|$ as $m \rightarrow \infty$ then the limit is the supremum. Evaluating the derivative for the closed form expression of $\norm{\mu_\alpha}^2$, and taking the supremum, we get:
    \begin{align*}
        \left|\frac{d}{d\alpha}\norm{\mu_\alpha}^2\right| & \leq \sup_{m \in \mathbb{N}}\left|\frac{(-2m\alpha^{2m-1})(1-\alpha^2)-(1-\alpha^{2m})(-2\alpha)}{(1 - \alpha^2)^2}\right| \\
        & = \sup_{m \in \mathbb{N}}\left|\frac{2\alpha - 2m\alpha^{2m-1} + 2m\alpha^{2m+1} - 2\alpha^{2m+1}}{(1 - \alpha^2)^2}\right| \\
        & = \lim_{m \rightarrow \infty}\left|\frac{2\alpha - 2m\alpha^{2m-1} + 2m\alpha^{2m+1} - 2\alpha^{2m+1}}{(1 - \alpha^2)^2}\right| \\
        & = \frac{2\alpha}{(1 - \alpha^2)^2} \leq  \frac{2(1-\gamma)}{(1 - (1-\gamma)^2)^2} \leq \frac{2}{\gamma^2}\,.\\
    \end{align*}
\end{proof}

\begin{lemma}\label{lem:in-prod-bound}
    Let $\{\phi_j\}$ be the eigenvectors of \(H_m\). Then for all $j \in [T]$, $\alpha \in [0,1-\gamma]$, and $\gamma \leq 2/3$,
    $$ |\mu_\alpha^\top\phi_j|  \le   \sqrt{\frac{2}{\gamma}}\sigma_j^{1/4} \leq \frac{30}{\sqrt{\gamma}}\log^{1/4}\left(\frac{2}{\gamma}\right)\exp\left(-\frac{\pi^2j}{16\log T}\right)\,.$$
\end{lemma}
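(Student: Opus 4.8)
The statement packages two inequalities, and I would handle them separately.

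\textbf{The second inequality} $\sqrt{2/\gamma}\,\sigma_j^{1/4} \le \tfrac{30}{\sqrt\gamma}\log^{1/4}(2/\gamma)\exp(-\pi^2 j/(16\log T))$ is immediate from Lemma~\ref{lem:Z-decay} and uses no spectral structure: taking fourth roots in $\sigma_j \le 156800\log(2/\gamma)\exp(-\pi^2 j/(4\log T))$ gives $\sigma_j^{1/4} \le 156800^{1/4}\log^{1/4}(2/\gamma)\exp(-\pi^2 j/(16\log T))$, and since $\sqrt{2}\cdot 156800^{1/4}<30$ (equivalently $156800 < 30^4/4$) we are done. (This invokes $T\ge 10$, as Lemma~\ref{lem:Z-decay} requires.)

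\textbf{The first inequality: plan.} Fix $j$ with $1\le j\le m$ (for $j>m$ the claim is vacuous) and set $g(\alpha):=\mu_\alpha^\top\phi_j$, a polynomial of degree $\le m-1$. The idea is to upgrade the $L^2$-control that the eigenvalue equation yields on $g$ to a pointwise bound through a one-dimensional interpolation inequality, and then to play it off against the trivial bound. Three ingredients: (i) Since $H_m=\int_0^{1-\gamma}\mu_\alpha\mu_\alpha^\top\,d\alpha$, $H_m\phi_j=\sigma_j\phi_j$, and $\|\phi_j\|=1$, we get the energy identity $\int_0^{1-\gamma}g(\alpha)^2\,d\alpha=\phi_j^\top H_m\phi_j=\sigma_j$, while pointwise $g(\alpha)^2\le\|\mu_\alpha\|^2\le 1/\gamma$ by the bound on $\|\mu_\alpha\|^2$ established just above. (ii) With $H_m':=\int_0^{1-\gamma}\dot\mu_\alpha\dot\mu_\alpha^\top\,d\alpha$ where $\dot\mu_\alpha:=\tfrac{d}{d\alpha}\mu_\alpha$, one has $\int_0^{1-\gamma}g'(\alpha)^2\,d\alpha=\phi_j^\top H_m'\phi_j\le \Tr(H_m')=\sum_{k=1}^{m-1}\frac{k^2}{2k-1}(1-\gamma)^{2k-1}$; using $\frac{k^2}{2k-1}\le k$, the geometric sum $\sum_{k\ge1}k(1-\gamma)^{2k-1}=\frac{1-\gamma}{\gamma^2(2-\gamma)^2}$, and $\frac{1-\gamma}{(2-\gamma)^2}\le\tfrac14$, this gives $\Tr(H_m')\le\frac{1}{4\gamma^2}$, hence $\|g'\|_{L^2[0,1-\gamma]}\le\frac{1}{2\gamma}$. (iii) Writing $g(\alpha)^2=g(\alpha_0)^2+\int_{\alpha_0}^{\alpha}2g(\beta)g'(\beta)\,d\beta$, averaging over $\alpha_0\in[0,1-\gamma]$, and applying Cauchy--Schwarz,
\[
g(\alpha)^2 \;\le\; \frac{1}{1-\gamma}\int_0^{1-\gamma}g^2 \;+\; 2\|g\|_{L^2}\|g'\|_{L^2} \;\le\; \frac{\sigma_j}{1-\gamma} + \frac{\sqrt{\sigma_j}}{\gamma}.
\]

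\textbf{Finishing, and the delicate point.} I would then split on the size of $\sigma_j$. If $\sigma_j\ge 1/4$, the trivial bound already suffices: $g(\alpha)^2\le 1/\gamma\le\frac{2}{\gamma}\sqrt{\sigma_j}$. If $\sigma_j<1/4$, then $\sqrt{\sigma_j}<1/2$, so $\frac{\sigma_j}{1-\gamma}\le\frac{\sqrt{\sigma_j}}{2(1-\gamma)}$, and since $\gamma\le 2/3$ gives $\frac{1}{2(1-\gamma)}\le\frac1\gamma$, the display yields $g(\alpha)^2\le\sqrt{\sigma_j}\big(\frac{1}{2(1-\gamma)}+\frac1\gamma\big)\le\frac{2}{\gamma}\sqrt{\sigma_j}$. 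Either way $g(\alpha)^2\le\frac{2}{\gamma}\sqrt{\sigma_j}$, i.e.\ $|\mu_\alpha^\top\phi_j|\le\sqrt{2/\gamma}\,\sigma_j^{1/4}$. The one step that genuinely needs care is the interaction of (ii) and (iii): the interpolation inequality produces an additive term $\sigma_j/(1-\gamma)$ that scales like $\sigma_j$, not like $\sqrt{\sigma_j}$, so landing on exactly the stated constant $\sqrt{2/\gamma}$ (rather than some larger absolute multiple of $1/\sqrt{\gamma}$) forces both the sharpened trace estimate $\Tr(H_m')\le 1/(4\gamma^2)$ — the crude $1/\gamma^2$ is insufficient — and the $\sigma_j\ge 1/4$ versus $\sigma_j<1/4$ dichotomy; everything else (the integral identities, the geometric series, Cauchy--Schwarz) is routine.
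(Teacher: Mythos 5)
Your proof is correct, and the first (and harder) inequality is established by a genuinely different route from the paper's. The paper works with $g(\alpha)=(\mu_\alpha^\top\phi_j)^2$ directly: it asserts that $g$ inherits the $2/\gamma^2$-Lipschitz bound from $\|\mu_\alpha\|^2$, and then uses an extremal-function argument (the truncated triangle $\Delta(\alpha)=\max\{R-\tfrac{2}{\gamma^2}\alpha,0\}$ minimizes the integral among non-negative $2/\gamma^2$-Lipschitz functions attaining a maximum $R$), so that $\sigma_j=\int g\ge R^2\gamma^2/4$ gives $R\le\tfrac{2}{\gamma}\sqrt{\sigma_j}$ in one step, with no case split. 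You instead control $\int_0^{1-\gamma}(g')^2=\phi_j^\top H_m'\phi_j\le\Tr(H_m')\le 1/(4\gamma^2)$ via the derivative Hankel matrix and then apply an Agmon-type interpolation $\|g\|_\infty^2\le\tfrac{1}{1-\gamma}\|g\|_2^2+2\|g\|_2\|g'\|_2$, paying for the extra additive $\sigma_j/(1-\gamma)$ term with the dichotomy on $\sigma_j\gtrless 1/4$; both routes land on the identical constant $\sqrt{2/\gamma}$. Your derivative bound is arguably the more careful of the two: the paper's Lipschitz transfer step ("$g(\alpha)$ is its projection on $\phi_j$") implicitly needs $2|\mu_\alpha^\top\phi_j|\,|\dot\mu_\alpha^\top\phi_j|\le 2/\gamma^2$, which does not literally follow from the stated bound $|\tfrac{d}{d\alpha}\|\mu_\alpha\|^2|=2|\mu_\alpha^\top\dot\mu_\alpha|\le 2/\gamma^2$ without an additional Cauchy--Schwarz estimate on $\|\mu_\alpha\|\,\|\dot\mu_\alpha\|$, whereas your $L^2$ bound on $g'$ follows exactly from the trace computation you give. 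The second inequality is handled identically in both proofs (fourth root of Lemma \ref{lem:Z-decay}, with $\sqrt{2}\cdot 156800^{1/4}<30$).
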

\begin{proof}
    Consider the scalar function $g(\alpha) = (\mu_{\alpha}^\top \phi_j)^2 $ over the interval $[0,1-\gamma]$. First, notice that by definition of $\phi_j$ as the eigenvectors of $H_m$, and $\sigma_j$ as the corresponding eigenvalues, we have  
    $$ \int_{0}^{1-\gamma}  g(\alpha)  d\alpha =  \int_{0}^{1-\gamma}\big (\phi_j^\top\mu_\alpha \big )^2 d\alpha = \int_{0}^{1-\gamma}\phi_j^{\top}\mu_{\alpha}\mu_{\alpha}^{\top}\phi_j d\alpha = \phi_j^{\top}Z_h \phi_j = \sigma_j. $$
    Since $\norm{\mu_\alpha}^2$ is $2/\gamma^2$-lipschitz in the interval $[0, 1-\gamma]$, and $g(\alpha)$ is its projection on $\phi_j$, $g(\alpha)$ is also $2/\gamma^2$-lipschitz. Note that $g(\alpha)$ is also a non-negative function and integrates to $\sigma_j$ over the interval $[0, 1-\gamma]$. Say $R$ is the maximum value achieved by $g(\alpha)$ for all $\alpha \in [0, 1-\gamma]$ then $R \leq \norm{\mu_\alpha}^2 \leq 1/\gamma$. Subject to achieveing the maximum at $R$, the non-negative $2/\gamma^2$-lipschitz function over $[0, 1-\gamma]$ with the smallest integral is given by:
    \begin{align*}
        \Delta(\alpha) = \max\left\{R - \frac{2}{\gamma^2}\alpha, 0\right\}\,,
    \end{align*}
    for which $\int_{0}^{1-\gamma}\Delta(\alpha)d\alpha = R^2\gamma^2/4$ whenever $\gamma \leq 2/3$. Thus, we get that $R \leq \frac{2
}{\gamma}\sqrt{\sigma_j}$ and hence $|\mu_\alpha^\top\phi_j| \leq \sqrt{\frac{2}{\gamma}}\sigma_j^{1/4}$. Using the upper bound on $\sigma_j$ from lemma \ref{lem:Z-decay}, we get the result.
    
\end{proof}
\section{Stabilized Spectral Policy}\label{app:stabilized}

Assumption \ref{asm:zero-stabilizable-system} restricts us to competing only against systems for which the zero matrix is $(\kappa, \gamma)$-diagonalizably stable. This implies that there exists a decomposition:
\[
    A = H L H^{-1},
\]
where $\|H\|, \|H^{-1}\| \leq \kappa$, $L$ is diagonal, and $\|L\| \leq 1 - \gamma$. However, our proofs only require the bound:
\[
    \forall \,\, i \in \mathbb{N}, \quad \quad \quad \quad \|A^i\| \leq \kappa^2 (1 - \gamma)^i,
\]
which holds even if $L$ is not diagonal. In fact, it suffices for the zero matrix to be $(\kappa, \gamma)$-strongly stable, as defined in \cite{cohen2018online}. For completeness, we recall the definition:
\begin{definition}[Definition 3.1 in \cite{cohen2018online}]
    A linear policy $K$ is \emph{$(\kappa, \gamma)$-strongly stable} if there exist matrices $L, H$ such that:
    \[
        A + BK = H L H^{-1},
    \]
    and the following conditions hold:
    \begin{enumerate}
        \item The spectral norm of $L$ is strictly smaller than unity, i.e., $\|L\| \leq 1 - \gamma$.
        \item The controller and the transformation matrices are bounded, i.e., $\|K\|\,,\|H\|\,, \|H^{-1}\| \leq \kappa$.
    \end{enumerate}
\end{definition}
\noindent However, the assumption of $0$ being a $(\kappa,\gamma)$-strongly stable linear controller can be further relaxed by using a precomputed $(\kappa,\gamma)$-strongly stable matrix $K_0$. This can be done using an $\sf SDP$ relaxation as described in \cite{cohen2018online}. Given access to a $(\kappa, \gamma)$-strongly stable $K_0$, we learn a \textit{stabilized} spectral policy using online gradient descent defined as follows:
\[
    \bu_t^M := K_0\bx_t^M + \sum_{j=1}^h \sigma_j^{1/4}M_j\tilde{W}_{t-1:t-m}\boldsymbol{\phi}_j.
\]
Consider playing $\tilde{\bu}_t = \bu_t + K_0\bx_t$ instead of $\bu_t$ at each $t \in [T]$. Observe that the system
\begin{align}
    \bx_{t+1} = A\bx_t + B\tilde{\bu}_t + \bw_t \,,\label{eqn:sys_unstable}
\end{align}
when controlled by $\tilde{\bu}_t$ behaves the same as the system
\begin{align}
    \bx_{t+1} = (A+BK_0)\bx_t + B\bu_t + \bw_t \,, \label{eqn:sys_stable}
\end{align}
when controlled by $\bu_t$. This means that the sequence of states in both the cases is the same.  Thus, since the $0$ matrix is a $(\kappa,\gamma)-$strongly stable for system \eqref{eqn:sys_stable}, the regret of our algorithm on system \eqref{eqn:sys_stable} is bounded by our result. By the structure of our proofs, for system \eqref{eqn:sys_stable}, each one of 
\begin{itemize}
    \item[(i)] $\max\left\{\norm{\bx_t^K - \bx_t^{K,m}}, \norm{\bu_t^K - \bu_t^{K,m}}\right\}\,,$
    \item[(ii)] $\max\left\{\norm{\bx_t^{K,m} - \bx_t^{M^*}}, \norm{\bu_t^{K,m} - \bu_t^{M^*}}\right\}\,,$
    \item[(iii)] $\max\left\{\norm{\bx_t(M_{1:h}) - \bx_t(M'_{1:h})}, \norm{\bu_t(M_{1:h}) - \bu_t(M'_{1:h})}\right\}\,,$
    \item[(iv)] $\max\left\{\norm{\bx_t - \bx_t(M^t_{1:h})}, \norm{\bu_t - \bu_t(M^t_{1:h})}\right\}\,,$
\end{itemize} remains bounded. 
Observe for any state $\bx$ and control $\bu$, if $\tilde{\bu} = \bu + K_0\bx$ then:
\begin{align}
     \max\left\{\norm{\bx - \bx'}, \norm{\tilde{\bu} - \tilde{\bu}'}\right\} \leq \max\left\{\norm{\bx - \bx'}, \norm{\bu - \bu'} + \kappa\norm{\bx - \bx'}\right\} \leq 2\kappa\max\left\{\norm{\bx - \bx'}, \norm{\bu - \bu'}\right\} \nonumber\,.
\end{align}
Thus, replacing $\bu_t(M)$ with $\tilde{\bu}_t(M)$ in (iii) and (iv) yields the same bound with an additional factor of $2\kappa$. Now, observe that $\bu^K = K\x = K_0\x + (K-K_0)\x = \tilde{\bu}^{(K-K_0)}$. Define
\[
\tilde{\bu}_t^{K,m} := K_0\bx_t + (K-K_0)\sum_{i=1}^m (A+BK)^{i-1}\bw_{t-i}\,,
\]
and choose
\[
\tilde{M}^*_i = \sigma_i^{-1/4}(K-K_0) H\left(\sum_{j=1}^d\boldsymbol{\phi}_i^{\top}\boldsymbol{\mu}_{\alpha_j}e_je_j^\top\right)H^{-1} \quad \quad \quad \quad \forall \,i \in [h]\,.
\]
Now, replacing $\bu_t^K$ with $\tilde{\bu}_t^{(K-K_0)}$, $\bu_t^{K,m}$ with $\tilde{\bu}_t^{K,m}$ and $\bu_t^{M^*}$ with $\tilde{\bu}_t^{\tilde{M}^*}$ in (i) and (ii), we get the same bounds with an additional factor of $2\kappa$. This allows us to conclude that, when competing against the same policy class $\mathcal{S}$, we get an upper bound on the regret with the same order of growth with respect to $T$ and $1/\gamma$. 

\section{Advantage of Smaller $\gamma$}\label{appendix:smaller_gamma_construction}
Previous works require a stability margin of $\gamma = \Omega(1/polylog(T))$ to ensure an $O(polylog(T))$ running time. In contrast, this work shows that setting $\gamma = \Omega(1/T^k)$ for $k \in (1, 1/12)$ maintains sublinear regret while still guaranteeing an $O(polylog(T))$ running time. In this section, we construct an example demonstrating that choosing $\gamma = 1/T^k$ for $k \in (0, 1/12)$ results in significantly lower aggregate loss compared to $\gamma = 1/polylog(T)$. Consider a noiseless linear dynamical system with parameters $a, b \in \mathbb{R}$, governed by the update equation:  
\[
x_{t+1} = ax_t + bu_t \,.
\]
The loss function at each time step is defined as:  
\[
\forall \, t \in [T], \quad c_t(x, u) = \max\{-x, -1\}.
\]
Since this is a scalar system, for sufficiently large $\kappa$, the class of $(\kappa, \gamma)$-diagonalizably stable controllers reduces to:
\[
S(\gamma) = \{k \in \reals \,|\, 0 \leq a+bk \leq 1-\gamma\}\,.
\]
If the initial state $x_0 = 1$, then:
\begin{align*}
    \min_{k \in S(\gamma)} \sum_{t=1}^T c_t(x_t, u_t) = \min_{k \in S(\gamma)} \sum_{t=1}^T (-x_t) = -\max_{k \in S(\gamma)} \sum_{t=1}^T (a+bk)^{t-1} = -\sum_{t=1}^T(1-\gamma)^{t-1} = -\frac{1 - (1-\gamma)^T}{\gamma}\,.
\end{align*}
Using the fact that $0 \leq 1 - \gamma \leq e^{-\gamma}$, we can upper and lower bound this expression as:
\[
-\frac{1}{\gamma} \leq \min_{k \in S(\gamma)} \sum_{t=1}^T c_t(x_t, u_t) \leq -\frac{1-e^{-\gamma T}}{\gamma}\,.
\]
Using the lower bound,
\[
\min_{k \in S(1/polylog(T))} \sum_{t=1}^T c_t(x_t, u_t) \geq -polylog(T)\,,
\]
and using the upper bound,
\[
\min_{k \in S(1/T^k)} \sum_{t=1}^T c_t(x_t, u_t) \leq -T^k(1-e^{-T^{(1-k)}}) \leq -T^k/2\,.
\]
Hence, the difference in the minimum costs of the two cases is lower bounded as:
\[
\min_{k \in S(1/polylog(T))} \sum_{t=1}^T c_t(x_t, u_t) - \min_{k \in S(1/T^k)} \sum_{t=1}^T c_t(x_t, u_t) \geq T^k/2 - polylog(T) = \Omega(T^{k/2})\,.
\]
Thus, choosing $\gamma = 1/T^k$ results in a significantly lower cost for the best controller in the policy class $S(\gamma)$ (which we compete against) compared to the case when $\gamma = 1/polylog(T)$. In particular, the improvement is by a polynomial factor in $T$.

\section{Experiments}\label{sec:experiments}

We present a series of synthetic experiments designed to evaluate the performance of Algorithm \ref{alg:mainA} (OSC). We compare OSC against the gradient perturbation controller (GPC) as defined in \cite{agarwal2019onlinecontroladversarialdisturbances}. We analyze the performance of both controllers under three different system settings:
linear signal, where the initial state $\x_0$ is sampled from the Gaussian distribution; signal with the \texttt{ReLU} state transition; and the STU signal as outlined in \cite{agarwal2024spectralstatespacemodels}.
  
For each experiment setting, we consider an LDS with a state dimension of $d=100$ and a control dimension of $n=40$. The system matrix $A \in \mathbb{R}^{d \times d}$ is diagonalizable, with largest eigenvalue being $0.9$, ensuring marginal stability. The control matrix $B \in \mathbb{R}^{d \times n}$ consists of Normally distributed entries.

Figure \ref{fig:main_fig} shows that the OSC controller with reduced number of parameters outperforms the GPC controller under all signal settings. 

\begin{figure}[H]
    \centering
    \includegraphics[width=1\textwidth]{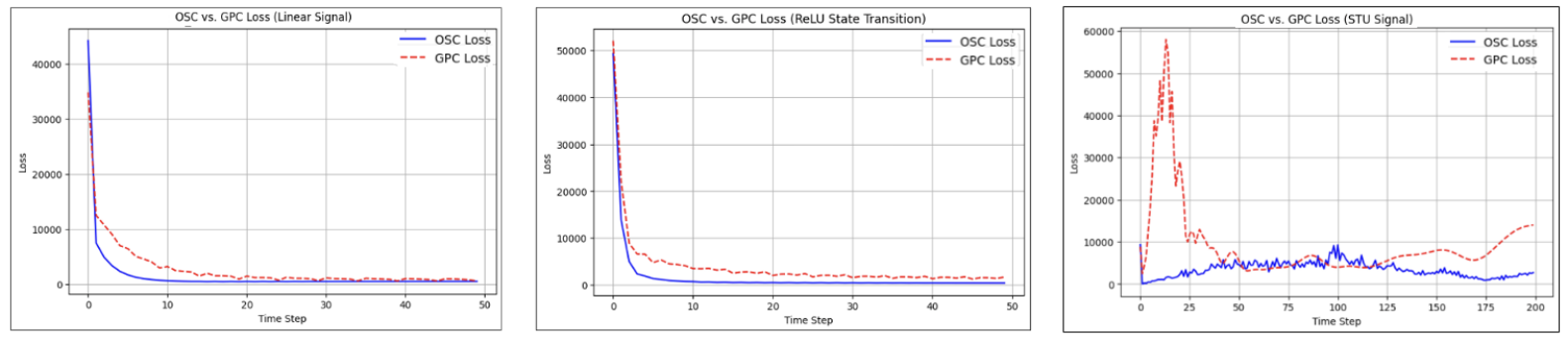}
    \caption{Performance of the OSC Algorithm against GPC on the linear signal (left), signal with the ReLU state transition (middle), and the STU signal (right).}
    \label{fig:main_fig} 
\end{figure}

\section{Conclusion and Discussion}

We presented a new method for online control of linear dynamical systems under adversarial disturbances and cost functions. While inspired by ideas from spectral filtering in online prediction, our approach introduces a fundamentally different relaxation tailored to the control setting, where past responses are not directly observed and actions shape future trajectories. By approximating linear policies using spectral filters derived from a specific Hankel matrix, we construct a convex relaxation that enables efficient online learning.

Our approach achieves logarithmic dependence on the inverse of the stability margin, resulting in exponentially faster runtime and fewer learnable parameters, while maintaining optimal regret guarantees. More broadly, our technique offers a new perspective on approximating linear policies for control and may be of independent interest beyond the setting studied here.

\ifanonymous
\else
\section*{Acknowledgments}
The authors thank Karan Singh and Zhou Lou for their valuable comments. EH gratefully acknowledges support from the Office of Naval Research, and Open Philanthropy.
\fi

\bibliography{main.bib}
\bibliographystyle{plainnat}

\end{document}